\newcommand\orcidicon[1]{\href{https://orcid.org/#1}{\includegraphics[scale=0.05]{orcid}}}
\newcommand{\node}{agent\xspace}
\newcommand{\nodes}{agents\xspace}
\newcommand{\Real}[1]{ { {\mathbb R}^{#1} } }
\DeclareMathOperator*{\argmin}{arg\,min}
\newcommand{\tr}[1]{\mbox{tr}\left(#1\right)}
\newcommand{\E}[1]{\mathbb{E}\left[#1\right]}
\newcommand{\Enorm}[1]{\mathbb{E}\left[\left\|#1\right\|^2\right]}
\newcommand{\inv}{^{-1}}
\newcommand{\one}{\mathds{1}}
\newcommand{\ereg}{e_\regSet}
\theoremstyle{plain}
\newtheorem{cor}{Corollary}
\newtheorem{prop}{Proposition}
\newtheorem{lemma}{Lemma}
\theoremstyle{definition}
\newtheorem{ass}{Assumption}
\theoremstyle{remark}
\newtheorem{rem}{Remark}
\newcommand{\sensSet}{\mathcal{V}}
\newcommand{\malSet}{\mathcal{M}}
\newcommand{\regSet}{\mathcal{R}}
\newcommand{\xnode}[2]{x_{#1}(#2)}
\newcommand{\xreg}{x_\regSet}
\newcommand{\neigh}[1]{\mathcal{N}_{#1}}
\newcommand{\priornode}[1]{\theta_{#1}}
\newcommand{\priorall}{\theta}
\newcommand{\priornodeout}[1]{\tilde{\theta}_{#1}}
\newcommand{\priorallout}{\tilde{\theta}}
\newcommand{\priorreg}{\theta_\regSet}
\newcommand{\thbar}{\bar{\theta}}
\newcommand{\thbarout}{\bar{\priorallout}}
\newcommand{\thbarreg}{\bar{\theta}_\regSet}
\newcommand{\thbarmal}{\bar{\tilde{\theta}}_\malSet}
\newcommand{\Var}{\tilde{\Sigma}}
\newcommand{\Creg}{C_R}
\newcommand{\selreg}{S_R}
\newcommand{\Emat}{\selreg L - \Creg\selreg}
\newcommand{\lami}{\lambda_i}
\newcommand{\lam}{\lambda}
\newcommand{\sumall}[1]{\sum_{#1\in\sensSet}}
\newcommand{\sumneigh}[2]{\sum_{#1\in\neigh{#2}}}
\newcommand{\sumreg}[1]{\sum_{#1\in\regSet}}
\renewcommand{\algorithmiccomment}[1]{\bgroup\hfill//~#1\egroup}
\newcommand{\blue}[1]{{\color{blue}#1}}
\newcommand{\revision}[1]{#1}
\newcommand{\linkToPdf}[1]{\href{#1}{\blue{(pdf)}}}
\newcommand{\linkToPpt}[1]{\href{#1}{\blue{(ppt)}}}
\newcommand{\linkToCode}[1]{\href{#1}{\blue{(code)}}}
\newcommand{\linkToWeb}[1]{\href{#1}{\blue{(web)}}}
\newcommand{\linkToVideo}[1]{\href{#1}{\blue{(video)}}}
\newcommand{\linkToMedia}[1]{\href{#1}{\blue{(media)}}}
\newcommand{\award}[1]{\xspace} 
\newcommand{\eg}{\emph{e.g.,}\xspace}
\newcommand{\ie}{\emph{i.e.,}\xspace}
\title{\LARGE\bfseries\titlecap{competition-based resilience in distributed quadratic optimization}}
\author{Luca~Ballotta$ ^1 $, Giacomo Como$ ^2 $, Jeff S. Shamma$^3$ and Luca~Schenato$ ^1 $%
	\thanks{This work has been partially funded by 
		the Italian Ministry of Education, University and Research (MIUR) through the PRIN project no. 2017NS9FEY entitled ``Realtime Control of 5G Wireless Networks: Taming the Complexity of Future Transmission and Computation Challenges'' and through the initiative "Departments of Excellence" (Law 232/2016). The views and opinions expressed in this work are those of the authors and do not necessarily reflect those of the funding institutions.}%
	\thanks{$ ^1 $Department of Information Engineering, University of Padova, 35131 Padova, Italy
		{\tt\small \{ballotta, schenato\}@dei.unipd.it}}%
	\thanks{$ ^2 $Department of Mathematical Sciences, Politecnico di Torino, Corso Duca degli Abruzzi 24, 10129, Torino, Italy 
		{\ttfamily\small giacomo.como@polito.it}}%
	\thanks{$ ^3 $Industrial and Enterprise Systems Engineering (ISE), University of Illinois Urbana-Champaign, Illinois, USA  
		{\ttfamily\small jshamma@illinois.edu}}
}
\begin{document}
	\bstctlcite{MyBSTcontrol}
	\begin{textblock}{20}(-2,0.05)
		\footnotesize
		\centering
		\setstretch{1}
		This paper has been accepted for the 61th IEEE Conference on Decision and Control in Cancun, December 6-9 2022.\\
		Please cite the paper as: L. Ballotta, G. Como, J. S. Shamma, and L. Schenato,\\
		“\titlecap{competition-based resilience in distributed quadratic optimization}”,\\
		IEEE Conference on Decision and Control, 2022.
	\end{textblock}
	\begin{textblock}{10}(3,15)
		\footnotesize
		\centering
		\setstretch{1}
		\textcopyright 2022 IEEE.  
		Personal use of this material is permitted.  
		Permission from IEEE must be obtained for all other uses, in any current or future media, including reprinting/republishing this material for advertising or promotional purposes, creating new collective works, for resale or redistribution to servers or lists, or reuse of any copyrighted component of this work in other works.
	\end{textblock}
	\maketitle
	

\begin{abstract}

	This paper proposes a novel approach to resilient distributed optimization with quadratic costs
	in a networked control system (\eg wireless sensor network, power grid, robotic team)
	prone to external attacks (\eg hacking, power outage) that cause \nodes to misbehave.
	Departing from classical filtering strategies proposed in literature,
	we draw inspiration from a game-theoretic formulation of the consensus problem
	and argue that adding competition to the mix can enhance resilience
	in the presence of malicious \nodes.
	Our intuition is corroborated by analytical and numerical results showing that
	i) our strategy highlights the presence of a nontrivial trade-off between blind collaboration and full competition, and
	ii) such competition-based approach can outperform state-of-the-art algorithms based on Mean Subsequence Reduced.
\end{abstract}

\section{Introduction}\label{sec:intro}

\textquotedblleft With great power comes great responsibility",
and Networked Control Systems have great power indeed.
From power grids regulating energy consumption,
to large-scale sensor networks able to monitor vast environments,
to fleets of autonomous vehicles for intelligent transportation,
everyday life depends more and more on
control of interacting devices. 

While this brings numerous benefits, 
a major drawback is that malicious \node can locally 
intrude from any point in the system, 
and cause major damage at a global scale.
Recently,
Department of Energy secretary
stated that enemies of the United States can shut down the U.S. power grid,
and it is known that hacking groups around the world have high technological sophistication~\cite{USpowergridattack}.
Cyberattacks hit Italian health care IT infrastructures during the COVID pandemic,
disrupting services for weeks~\cite{healthcareItahacking}.
Another concern is represented by failure cascades 
generating from single nodes. 
Damages from cascading failures have notable examples across many domains,
from electric blackouts over large areas,
to denial of service or malfunctioning of web applications.

Such problems have been tackled in literature for several years,
with focus on, 
\eg power outage~\cite{6733531},
cascading failure~\cite{6915846,8070359},
or denial of service~\cite{8629941}.
Related literature in control theory has mostly focused on robustness
of distributed optimization algorithms,
with particular attention to so-called resilient consensus.
For example, average consensus represents a fundamental tool
in many applications, ranging from robot coordination to management of power grids.
However, the classical consensus algorithm is fragile 
in the presence of misbehaving nodes,
which may easily deceive the rest of the network.
To contrast such a problem, the most popular approach is based on the
filtering technique named \textquotedblleft Mean Subsequence Reduced" (MSR),
where each node filters out the largest and smallest incoming values.
In particular,
seminal work~\cite{6481629} improved classical MSR
introducing a weighted version (W-MSR) and the notion of $ r $\textit{-robustness} of a graph,
that was demonstrated to be a suitable measure enabling tight conditions for MSR-based resilient consensus.
Among the variants of W-MSR,
~\cite{DIBAJI201523} studies resilient control for double integrators,
~\cite{WANG20203409} tackles mobile malicious agents,
and~\cite{SHANG2020109288} targets nonlinear systems with state constraints.

Despite the success MSR-based approaches have known in literature,
and also the effectiveness showed in many applications domains,
a minimal level of network robustness is required for
theoretical guarantees of reaching resilient consensus.
In fact, 
little can be said about the system behavior when robustness conditions do not hold.
While sometimes algorithms still work in practice,
the lack of theoretical guarantees may be undesired in certain applications,
where a more conservative but safer approach may be preferred instead.
Moreover, while in certain applications the \nodes may agree upon any common value,
in other cases \emph{average consensus} plays a crucial role.
Thus, we depart from the classical MSR-based strategies with the aim
of designing distributed algorithms that can ensure certain levels of \revision{resilience},
and whose performance can be characterized, beyond the
hard constraint of achieving consensus by all means,
addressing the distance from the optimal solution as performance metric in a distributed optimization task.

To this aim, we draw inspiration by game theory.
Distributed cooperative control and game theoretic approaches, despite their apparent contrast,
are bounded from several perspectives
which have been largely explored in literature~\cite{MARDEN2015861,PROSKURNIKOV201765}.
Our starting point is~\cite{4814554},
where the authors discuss the connection between consensus and potential games.
Stepping forward, we propose a novel approach to resilient consensus
based on the celebrated Friedkin-Johnsen (FJ) dynamics~\cite{FJdynamics},
where an agent in the network can trade collaboration with neighbors for
selfishness, which we interpret as competition against other \nodes. 
Such a mixed approach allows us to explore the performance trade-off that arises
from different levels of collaboration in the network, 
which turns out to be crucial in the presence of attacks.
In fact, we observe that \textbf{the global network cost is minimized by a hybrid strategy
	whereby \nodes trust each other only partially}.

Towards this goal,
we first introduce and motivate average consensus for distributed quadratic optimization tasks in~\autoref{sec:setup}.
Then, we address the presence of outliers and misbehaving \nodes in the network,
and propose a competition-based strategy to enhance resilience to such adversaries in~\autoref{sec:resilient-strategy}.
In particular, we characterize the performance of our approach analytically (\autoref{sec:malicious})
and substantiate theoretical findings with numerical tests on the cost function (\autoref{sec:numerical-tests-FJ-err}).
To further reinforce the validity of our approach on actual problem instances,
we perform numerical simulations on large-scale networked systems
and compare to W-MSR,
showing that competition-based strategies can provide superior performance
if robustness requirements of MSR-based algorithms are not satisfied (\autoref{sec:literature-comparison}).
We conclude this paper by addressing some open questions and avenues for future research in~\autoref{sec:future-research}.

\section{\titlecap{setup}}\label{sec:setup}

We consider a networked control system composed of $ N $ \nodes,
collected in the set $ \sensSet = \{1,\dots,N\} $.
The state of \node $ i\in\sensSet $ is denoted as $ x_i\in\Real{} $,
and all states are stacked in the column vector $ x\in\Real{N} $.
Each \node $ i $ has a \emph{prior} $ \priornode{i} $,
and needs 
to minimize the mismatches with all priors,
\begin{equation}\label{eq:agent-cost-orig}
	f_i(x_i) \doteq \sum_{j\in\sensSet} \left(x_i - \priornode{j}\right)^2.
\end{equation}

\begin{ass}[Prior distribution]\label{ass:prior-distribution}
	The priors are distributed as i.i.d. random variables with zero mean and unit variance.
\end{ass}

Straightforward algebraic manipulations show that the optimal values of $ x_i $, $ i\in\sensSet $,
minimizing~\eqref{eq:agent-cost-orig} correspond to the average of the priors of all agents. 
Indeed,
\begin{equation}\label{eq:network-cost-orig}
	\begin{aligned}
		f(x) &\doteq \dfrac{1}{N}\sumall{i}f_i(x_i) = \dfrac{1}{N}\sumall{i}\sumall{j} \left(x_i - \priornode{j}\right)^2\\
			 &= \dfrac{1}{N}\sumall{i}\sumall{j}\left(x_i^2+\priornode{j}^2-2x_i\priornode{j}\right)\\
			 &= \dfrac{1}{N}\sumall{i}\left(Nx_i^2 + \sumall{j}\priornode{j}^2 - 2x_i\sumall{j}\priornode{j}\right)\\
			 &= \sumall{i}\left(x_i^2 - 2x_i\thbar + \dfrac{1}{N}\sumall{j}\theta_j^2\right)\\
			 &= \sumall{i}\left(x_i^2 - 2x_i\thbar +\thbar^2 \right) - N\thbar^2 + \sumall{j}\priornode{j}^2\\
			 &= \sumall{i}\left(x_i - \thbar\right)^2 + \sumall{i}\left(\theta_i^2 - \thbar^2\right)\\
			 &= \sumall{i}\left(x_i - \thbar\right)^2 + \mathrm{const},
	\end{aligned}
\end{equation}
where $ \thbar $ is the average of all priors.
In light of~\eqref{eq:network-cost-orig}, all \nodes need to reach \emph{average consensus}
to solve the optimization task. 
Hence, we are interested in minimizing the \emph{expected average consensus error} (or simply consensus error), defined as
\begin{equation}\label{eq:cons-error}
	e \doteq \E{\sumall{i}\left(x_i - \thbar\right)^2} = \E{\left\|x-C\priorall\right\|^2},
\end{equation}
where the expectation is taken with respect to the prior distribution,
$ \priorall\in\Real{N} $ stacks all priors, 
$ C \doteq \frac{1}{N}\one_N\one_N^{\top} $,
and $ \one_N \in \Real{N}$ is the vector of all ones.
According to the networked structure,
\nodes can communicate with neighbors to decrease local costs and reach consensus dynamically.

\begin{ass}[Communication network and weights]\label{ass:W-doubly-stochastic}
	The network topology is defined	by an irreducible symmetric doubly-stochastic matrix $ W $.
\end{ass}


\begin{ass}[Local optimization]\label{ass:no-self-loops}
	In view of the optimization tasks, 
	we require no self-loops, \ie $ W_{ii} = 0 \, \forall\,i\in\sensSet $.
\end{ass}

It is well known that, to reach average consensus under nominal conditions and~\cref{ass:W-doubly-stochastic},
the \nodes can update their states according to the so-called consensus dynamics,
\begin{equation}\label{eq:consensus-protocol}
	\xnode{i}{k+1} = \sumneigh{j}{i}W_{ij}\xnode{j}{k}, \qquad \xnode{i}{0} = \priornode{i},
\end{equation}
where $ \xnode{i}{k} $ is the state of \node $ i $ at time $ k\ge0 $
and $ \neigh{i}\subseteq\sensSet $ is its neighborhood in the topology $ W $.

\section{\titlecap{strategies for resilient consensus}}\label{sec:resilient-strategy}

\subsection{\titlecap{average consensus in the presence of outliers}}\label{sec:outliers}

We now assume that some \nodes are \emph{outliers},
\ie their priors are corrupted by additive noise.
The \nodes not affected by noise are called \emph{regular}.
We denote the subsets of outliers and regular \nodes as $ \malSet $ and $ \regSet $, respectively,
$ M \doteq |\malSet| $ and $ R \doteq |\regSet| $.
Formally speaking, the \emph{actual prior} of corrupted \node $ m\in\malSet $ is
$ \priornodeout{m} = \priornode{m} + v_m $,
where $ v_m $ is a zero-mean random variable with variance $ d $.
We assume that noises are uncorrelated with each other and with priors of all \nodes.

The average of priors in the presence of outliers differs from its nominal value $ \thbar $,
which prevents reaching average consensus via protocol~\eqref{eq:consensus-protocol}.
Indeed, the new average is
\begin{equation}\label{eq:prior-average-outliers}
	\begin{aligned}
		\thbarout = \dfrac{1}{N}\left(\sum_{i\in\regSet}\priornode{i} + \sum_{m\in\malSet}\priornodeout{m}\right) 
		= \thbar + \dfrac{1}{N}\sum_{m\in\malSet}v_m,
	\end{aligned}
\end{equation}
%
and protocol~\eqref{eq:consensus-protocol} drives \nodes
to the consensus error
\begin{equation}\label{eq:cons-error-outliers}
	e^C = \E{\sumall{i}\left(\tilde{\thbar} - \thbar\right)^2} = \dfrac{dM}{N}.
\end{equation}
Aiming to improve the standard consensus dynamics,
we draw inspiration from~\cite{4814554},
where the authors show that~\eqref{eq:consensus-protocol} can be interpreted,
in a game theoretic framework,
as the optimal action in a quadratic game where \node $ i $ repeatedly seeks to maximize the following utility function,
\begin{equation}\label{eq:utility-consensus}
	u_i(x_i) = - \sum_{j\in\neigh{i}} W_{ij}\left(x_i - x_j\right)^2.
\end{equation}
Note that~\eqref{eq:utility-consensus} forces each \node
to get as close as possible to its neighbors.
We modify the above utility as follows,
\begin{equation}\label{eq:utility-FJ}
	u_i^{FJ}(x_i) = -\lami\left(x_i-\priornode{i}\right)^2 - (1-\lami)\sumneigh{j}{i}W_{ij}\left(x_i-x_j\right)^2,
\end{equation}
with $ \lam_i\in[0,1] $.
Now, the best action to maximize~\eqref{eq:utility-FJ} at time $ k $ 
coincides with the Friedrick-Johnsen (FJ) dynamics~\cite{FJdynamics},
\begin{equation}\label{eq:FJ-dynamics}
	\xnode{i}{k+1} = \lami\priornode{i} + (1-\lami)\sumneigh{j}{i}W_{ij}\xnode{j}{k}.
\end{equation}
In words,~\eqref{eq:utility-FJ}--\eqref{eq:FJ-dynamics} make
the \nodes anchor to their priors
proportionally to parameters $ \lami $'s.
This allows regular \nodes that communicate with outliers
to partially retain their uncorrupted prior information,
whereas totally trusting their neighbors could drive them
very distant from the nominal average consensus.
\revision{Specifically,
\emph{full collaboration}~\eqref{eq:utility-consensus} is the special case of~\eqref{eq:utility-FJ}
with $ \lami \equiv 0 $, 
while $ \lami \equiv 1 $ refers to \emph{full competition}.
For the sake of simplicity, we set $ \lam_i \equiv \lam $ in the following
and leave a more detailed analysis for future work.}

It turns out that 
standard consensus~\eqref{eq:consensus-protocol} still yields the best performance
if the outliers are not malicious, \ie they follow the prescribed protocol.

\begin{prop}[Consensus protocol \emph{vs.} FJ dynamics with outliers]\label{prop:cons-err-out}
	In the presence of outliers, 
	consensus protocol~\eqref{eq:consensus-protocol} yields smaller error than
	the FJ dynamics~\eqref{eq:FJ-dynamics} for any $ \lam > 0 $.
\end{prop}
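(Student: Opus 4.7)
\emph{Proof plan.} My approach is to compute the steady state of the FJ dynamics in closed form, decompose it along the consensus direction and its orthogonal complement, and show that the extra term introduced by $\lambda>0$ is strictly positive.

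First, I would observe that for any $\lambda\in(0,1]$ the matrix $I-(1-\lambda)W$ is invertible, because $W$ is doubly stochastic with spectral radius $1$ and hence $(1-\lambda)W$ has spectral radius strictly less than $1$ (trivially so for $\lambda=1$). Therefore the FJ recursion~\eqref{eq:FJ-dynamics} is a contraction on the orthogonal complement of $\mathbf{1}$ scaled appropriately and converges to the unique fixed point
\[
x^\star \;=\; A(\lambda)\,\priorallout, \qquad A(\lambda) \doteq \lambda\bigl(I-(1-\lambda)W\bigr)^{-1}.
\]
Using the spectral decomposition $W=\sum_{i=1}^{N}\mu_i v_i v_i^\top$ with $v_1=\mathbf{1}/\sqrt N$ and $\mu_1=1$, I get $A(\lambda)=\sum_i\frac{\lambda}{1-(1-\lambda)\mu_i}v_iv_i^\top$; the eigenvalue $\mu_1=1$ contributes the coefficient $1$, so I can split $A(\lambda)=C+B(\lambda)$, where $C=v_1v_1^\top=\mathbf{1}\mathbf{1}^\top/N$ and $B(\lambda)=\sum_{i=2}^{N}\frac{\lambda}{1-(1-\lambda)\mu_i}v_iv_i^\top$ lives on the subspace orthogonal to $\mathbf{1}$. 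Crucially, $CB(\lambda)=B(\lambda)C=0$ and both are symmetric.

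Next I would plug $\priorallout=\priorall+v$ into the error and use this decomposition:
\[
A(\lambda)\priorallout - C\priorall \;=\; Cv + B(\lambda)\priorall + B(\lambda)v.
\]
Since $\priorall$ and $v$ are independent, zero-mean, with $\mathbb{E}[\priorall\priorall^\top]=I$ (\cref{ass:prior-distribution}) and $\mathbb{E}[vv^\top]=D$, where $D$ is diagonal with entry $d$ on $\malSet$ and $0$ elsewhere, the cross terms vanish and the orthogonality $CB(\lambda)=0$ yields $A(\lambda)^\top A(\lambda)=C+B(\lambda)^\top B(\lambda)$. Taking expectations and using $\mathrm{tr}(CD)=dM/N$ from~\eqref{eq:cons-error-outliers} gives
\[
e^{FJ}(\lambda) \;=\; \frac{dM}{N} \;+\; \mathrm{tr}\bigl(B(\lambda)^\top B(\lambda)\,D\bigr) \;+\; \|B(\lambda)\|_F^{\,2}.
\]

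Finally I conclude by showing that the correction terms are nonnegative with the Frobenius term strictly positive. The first added term is nonnegative because $B(\lambda)^\top B(\lambda)$ and $D$ are both positive semidefinite. For the Frobenius term, each coefficient $\lambda/(1-(1-\lambda)\mu_i)$ in the spectral expansion of $B(\lambda)$ has nonzero numerator and a denominator that cannot vanish since $|\mu_i|\le 1$ forces $(1-\lambda)\mu_i<1$ for $\lambda>0$; hence $\|B(\lambda)\|_F^{\,2}>0$ strictly and $e^{FJ}(\lambda)>e^C$. The only step requiring care is the spectral decomposition argument, namely verifying that $B(\lambda)$ is nonzero and orthogonal to $C$; once that is in place the rest is bookkeeping.
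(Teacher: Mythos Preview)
Your proof is correct and follows essentially the same route as the paper: both compute the FJ steady state $x = L\tilde\theta$ with $L = \lambda(I-(1-\lambda)W)^{-1}$, exploit the identity $LC = C$ (your $B(\lambda)C = 0$ is precisely this, stated spectrally) to split the error orthogonally, and identify the $Cv$ piece as $e^C$. The only differences are cosmetic --- the paper argues $LC = C$ directly from double stochasticity rather than via eigendecomposition and stops at $e^{FJ}_\lambda \ge e^C$, whereas you carry the computation one step further and verify that the residual $\|B(\lambda)\|_F^{2}$ is strictly positive, which in fact matches the strict inequality claimed in the statement.
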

\begin{proof}
	It is known that, if $ \lam W $ is Schur,
	the steady state reached by the FJ dynamics is~\cite{7577815}
	\begin{equation}\label{eq:FJ-dynamics-steady-state}
		x = L\priorallout, \qquad L \doteq \left(I-(1-\lam)W\right)^{-1}\lam,
	\end{equation}
	with $ \priorallout_i = \priornode{i}\,\forall \,i\in\regSet $.
	Notice that, being $ W $ symmetric and doubly stochastic, it holds $ L = L^{\top} $ and $ LC = C $.
	Let the outlier noises be stacked in the vector $ v\in\Real{N} $
	with $ v_i = 0 \,\forall \,i\in\regSet $.
	The consensus error under the FJ dynamics is then
	\begin{equation}\label{eq:cons-error-FJ-outliers}
		\begin{aligned}
			e^{FJ}_\lambda &= \E{\left\|L\priorallout - C\priorall\right\|^2} = \E{\left\|(L-C)\tilde{\priorall} + Cv\right\|^2}\\
			  &\overset{(i)}{=} \E{\left\|(L-C)\priorallout\right\|^2} + \E{\left\|Cv\right\|^2}\\
			  &\overset{(ii)}{=} \E{\left\|(L-C)\priorallout\right\|^2} + e^C \ge e^C,
		\end{aligned}
	\end{equation}
	where symmetry of $ L $ and $ C $ is used in $ (i) $ together with
	the facts $ C^2 = C $ and $ LC = C $ which
	annihilate the inner product, 
	and $ \E{\left\|Cv\right\|^2} = e^C $ is used in $ (ii) $. 
\end{proof}
\begin{rem}[Limit behavior of FJ dynamics]\label{rem:L-limit}
	As noted above, the FJ dynamics~\eqref{eq:FJ-dynamics} tends to protocol~\eqref{eq:consensus-protocol} as $ \lambda\searrow0 $.
	Intuitively, the steady state~\eqref{eq:FJ-dynamics-steady-state} should also tend to average consensus.
	This is indeed the case, as $\lim_{\lam\rightarrow0^+}L=\lim_{k\rightarrow+\infty}W^k$~\cite{7577815}.
\end{rem}

\subsection{\titlecap{average consensus in the presence of malicious agents}}\label{sec:malicious}

We now assume that outliers are also malicious, \ie
they do not obey the prescribed protocol.

\begin{ass}[Malicious \node dynamics]\label{ass:mal-node-dynamics}
	The state of malicious agents is constant,
	\ie $ \xnode{m}{k} \equiv \priornodeout{m} \ \forall \,m\in\malSet $.
	Accordingly, the rows in the matrix $ W $ corresponding to malicious \nodes
	have all off-diagonal elements equal to $ 0 $,
	while the elements on the main diagonal are set to $ 1 $.
\end{ass}

\begin{rem}[Malicious \nodes disrupt optimization]
	\cref{ass:mal-node-dynamics} is consistent with the resilient consensus literature,
	where often algorithms are tested against constant or drifting malicious \nodes
	that keep pulling their neighbors far off nominal average consensus~\cite{WANG20203409,9468419}.
	On the other hand,
	attackers may behave cleverly
	to not be detected,
	which needs to be tamed by suitable resilient strategies.
	However,
	this scenario goes beyond the scope of this paper.
\end{rem}

In this scenario, 
malicious \nodes in the set $ \malSet $ do not collaborate to minimize cost~\eqref{eq:network-cost-orig}.
Hence, we restrict the network optimization problem to the set of regular \nodes, 
redefining their local costs as (cf.~\eqref{eq:agent-cost-orig})
\begin{equation}\label{eq:network-cost-regular}
	f_i\left(x_i\right) = \dfrac{1}{R}\sumreg{j} \left(x_i - \priornode{j}\right)^2, \qquad i \in\regSet,
\end{equation}
and 
address the consensus error across regular \nodes (cf.~\eqref{eq:cons-error}),
\begin{equation}\label{eq:cons-error-regular}
	e_\regSet \doteq \E{\sumreg{i}\left(x_i - \thbarreg\right)^2} = \E{\left\|\xreg-\one_R\thbarreg\right\|^2},
\end{equation}
where $ \xreg \in \Real{R} $ and $ \priorreg \in \Real{R} $ stack
respectively states and priors of regular \nodes.

\subsubsection{Trivial FJ Dynamics}\label{eq:FJ1-better-consensus}

A first remarkable result is that simply letting regular \nodes
not update their states may be sufficient to outperform the standard consensus dynamics
in the presence of malicious \nodes.

\begin{prop}[Consensus protocol \emph{vs.} FJ dynamics with malicious \nodes]\label{prop:cons-err-mal}
	In the presence of malicious \nodes,
	the FJ dynamics~\eqref{eq:FJ-dynamics} with $ \lambda = 1 $
	yields smaller error than consensus protocol~\eqref{eq:consensus-protocol}
	if the noise intensity $ d $ is large enough.
\end{prop}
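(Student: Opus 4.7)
The plan is to show that the two errors scale differently in the noise intensity $d$: the FJ error with $\lam = 1$ is independent of $d$, whereas the consensus protocol error grows linearly in $d$, so for large $d$ the FJ dynamics wins. First, I would observe that setting $\lami \equiv 1$ in~\eqref{eq:FJ-dynamics} freezes every regular node at its own prior, that is, $\xnode{i}{k} = \priornode{i}$ for all $i \in \regSet$ and $k \ge 0$. Plugging this into~\eqref{eq:cons-error-regular} and using \cref{ass:prior-distribution} (priors i.i.d.\ with zero mean and unit variance), a direct variance calculation gives
\[
e_\regSet^{FJ}\big|_{\lam=1} \;=\; \sumreg{i}\E{\left(\priornode{i} - \thbarreg\right)^2} \;=\; R - 1,
\]
a constant independent of $d$.

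Next, I would derive the steady state of consensus protocol~\eqref{eq:consensus-protocol} under \cref{ass:mal-node-dynamics}. Partitioning indices into regular and malicious and using that the malicious rows of $W$ become the identity, the regular block satisfies $\xreg(k+1) = W_{\regSet\regSet}\,\xreg(k) + W_{\regSet\malSet}(\priormal + v)$, where $v \in \Real{M}$ stacks the zero-mean, variance-$d$ noises affecting the malicious priors. Irreducibility of the original $W$ (\cref{ass:W-doubly-stochastic}) together with $\malSet \neq \emptyset$ forces $W_{\regSet\regSet}$ to be Schur, so the regular states converge to $\xreg^\infty = (I - W_{\regSet\regSet})^{-1} W_{\regSet\malSet}(\priormal + v)$. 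Substituting this into~\eqref{eq:cons-error-regular} and exploiting that $v$ has zero mean and is independent of $\priorall$, the expected squared error splits cleanly into a prior-dependent term $\alpha$ and a noise-dependent term, yielding $e_\regSet^C = \alpha + \beta d$ with $\beta = \|(I - W_{\regSet\regSet})^{-1} W_{\regSet\malSet}\|_F^2$.

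Whenever at least one regular node has a malicious neighbor, $\beta > 0$, so any $d$ exceeding $\max\{0,(R - 1 - \alpha)/\beta\}$ makes $e_\regSet^C > R - 1 = e_\regSet^{FJ}|_{\lam=1}$, which establishes the proposition. The main obstacle I anticipate is not the variance decomposition, which is routine, but rigorously justifying that $W_{\regSet\regSet}$ is Schur so that the closed-form steady state is valid; this rests on the mild connectivity property that every regular node has a path to some malicious node, which follows from the original irreducibility of $W$. The degenerate case in which malicious and regular subgraphs are disconnected makes the malicious agents irrelevant to the regular dynamics and the proposition vacuous, so it may be excluded without loss of generality.
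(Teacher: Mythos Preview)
Your proposal is correct and follows the same two-step skeleton as the paper: both compute $e_1^{FJ}=R-1$ by a direct variance calculation, then show that the consensus-protocol error is affine and strictly increasing in $d$, so the comparison eventually favors $\lam=1$. The only substantive difference is in how the consensus steady state is treated. The paper asserts that under \cref{ass:mal-node-dynamics} every regular node converges to the plain average $\thbarmal=\tfrac{1}{M}\summal{m}\priornodeout{m}$, which produces the explicit slope $R/M$ and the closed-form threshold $d>M(1-2/R)-1$. You instead keep the exact limit $(I-W_{\regSet\regSet})^{-1}W_{\regSet\malSet}(\priormal+v)$ and absorb the topology into the constants $\alpha,\beta$. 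Your route forfeits the paper's explicit threshold but gains robustness: it does not rely on the limiting influence weights being uniform across malicious nodes, a claim that need not hold for $M>1$ on general graphs, so your argument is actually the safer of the two.
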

\begin{proof}
	We first compute the error induced by standard consensus~\eqref{eq:consensus-protocol}.
	In virtue of~\cref{ass:mal-node-dynamics},
	the steady-state consensus value is the average of malicious \node' priors,
	$ \thbarmal \doteq \frac{1}{M}\sum_{m\in\malSet}\priornodeout{m} $.
	The consensus error becomes
	\begingroup
	\allowdisplaybreaks
	\begin{align*}
			e^C &= \E{\sumreg{i}\left(\frac{1}{M}\sum_{m\in\malSet}\priornodeout{m} - \frac{1}{R}\sum_{i\in\regSet}\priornode{i}\right)^2}\\
				&\overset{(i)}{=} R\E{\left(\frac{1}{M}\sum_{m\in\malSet}\priornodeout{m}\right)^2} + R\E{\left(\frac{1}{R}\sum_{i\in\regSet}\priornode{i}\right)^2}\\
				&\overset{(ii)}{=} \dfrac{R}{M^2}\E{\left(\sum_{m\in\malSet}\priornode{m}\right)^2}+\dfrac{R}{M^2}\E{\left(\sum_{m\in\malSet}v_m\right)^2} + 1\\
				&\overset{(iii)}{=} \dfrac{R}{M} + \dfrac{Rd}{M} + 1.
	\end{align*}
	\endgroup
	On the other hand, the FJ dynamics with $ \lam = 1 $ simply freezes all regular \nodes' priors,
	yielding consensus error
	\begin{equation}\label{eq:FJ-err-mal-lamda-1}
		\begin{aligned}
			e_1^{FJ} &= \Enorm{\priorreg - \Creg\priorreg} = \Enorm{(I_R - \Creg)\priorreg}\\
					 &= \tr{\E{\priorreg\priorreg^\top}(I-\Creg)}= R-1,
		\end{aligned}
	\end{equation}
	\setlength\marginparwidth{30pt}
	It follows that, if 
		\begin{equation}\label{eq:cons-FJ-comp}
			d > M\left(1 - \dfrac{2}{R}\right) - 1,
	\end{equation}
	then $ e^C > e_1^{FJ} $ .
\end{proof}
\cref{prop:cons-err-mal} states that, if outlier noises $ v_m $
are sufficiently intense, 
\revision{full competition where \nodes anchor to their priors by setting $ \lam = 1 $ in~\eqref{eq:FJ-dynamics}
provides better performance than full collaboration with $ \lam = 0 $,
\ie standard consensus~\eqref{eq:consensus-protocol}.}

\subsubsection{Optimizing FJ Dynamics}\label{sec:opt-lam-less-1}


We are now interested in choosing $ \lam $ to reduce the consensus error.
In particular, we aim to characterize the optimal parameter,
which we denote as $ \lam^* \doteq \argmin_{\lam}\ereg $.
Note that continuity of $ \ereg $ (with its continuous extension at $ \lam = 0 $, see~\cref{rem:L-limit})
and Weierstrass theorem ensure that such $ \lam^* $ always exists.
In the following results, we use the covariance matrix of corrupted priors,
\begin{equation}\label{eq:prior-covariance-outliers}
	\Var  \doteq\E{\priorallout\priorallout^\top}= I_N + dV, \quad 
	V \doteq \left[\begin{array}{ c | c }
		0 & 0 \\ 
		\hline
		0 & I_M
	\end{array}\right],
\end{equation}
where w.l.o.g. we label malicious \nodes as $ \malSet = \{N-M+1,\dots,N\} $,
and $ V $ is the covariance matrix of noise vector $ v $. 

\begin{lemma}\label{lem:opt-lambda-less-1}
	The optimal parameter
	satisfies $ \lam^* < 1 $.
\end{lemma}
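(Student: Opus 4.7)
The plan is to show that the left derivative $\left.\frac{d\ereg}{d\lam}\right|_{\lam = 1^-}$ is strictly positive, so by continuity of $\ereg$ on $[0,1]$ the minimum cannot be attained at $\lam=1$ and therefore $\lam^*<1$.

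First, I would pin down the steady state on the regular block. Under~\cref{ass:mal-node-dynamics} the FJ fixed-point restricted to the regular coordinates reads
\begin{equation*}
\bigl(I - (1-\lam)\, W_{\regSet\regSet}\bigr)\,\xreg(\lam) = \lam\,\priorreg + (1-\lam)\, W_{\regSet\malSet}\, \priornodeout{\malSet},
\end{equation*}
with $W_{\regSet\regSet}$ and $W_{\regSet\malSet}$ the corresponding blocks of the original doubly-stochastic $W$. Irreducibility of $W$ guarantees that $W_{\regSet\regSet}$ has spectral radius strictly below one, so the map $\lam\mapsto\xreg(\lam)$ is smooth on a neighbourhood of $\lam=1$ with $\xreg(1)=\priorreg$. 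Implicit differentiation at $\lam=1$ then gives $\xreg'(1) = (I - W_{\regSet\regSet})\,\priorreg - W_{\regSet\malSet}\,\priornodeout{\malSet}$.

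Plugging into $\ereg'(1)=2\,\E{\priorreg^{\top}(I-\Creg)\,\xreg'(1)}$, the cross term containing $\priornodeout{\malSet}$ drops out by independence from $\priorreg$ and the zero-mean assumption on both priors and noise, and $\E{\priorreg\priorreg^{\top}}=I_R$ collapses what remains to
\begin{equation*}
\ereg'(1) = 2\,\tr{(I-\Creg)(I - W_{\regSet\regSet})}.
\end{equation*}
Expanding the trace, \cref{ass:no-self-loops} forces $\tr{W_{\regSet\regSet}}=0$, while double stochasticity of $W$ gives $\tr{\Creg W_{\regSet\regSet}}=(R-s)/R$, where $s \doteq \sum_{i\in\regSet,\,j\in\malSet} W_{ij}\in[0,R]$ is the total weight of regular-to-malicious edges. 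Altogether $\ereg'(1) = 2(R - s/R) \ge 2(R-1) > 0$ whenever $R\ge 2$, which settles the lemma (the degenerate case $R=1$ is trivial since then $\ereg\equiv 0$).

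The main hurdle is carrying out the block differentiation correctly once the malicious rows are effectively frozen under~\cref{ass:mal-node-dynamics} and justifying smoothness of $\xreg(\lam)$ up to the boundary $\lam=1$; once those are in place, the cancellation of the malicious-noise cross term by independence and the double-stochasticity identity for $\tr{\Creg W_{\regSet\regSet}}$ make the sign of $\ereg'(1)$ transparent.
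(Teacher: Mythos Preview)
Your argument is correct and follows the same strategy as the paper: evaluate the derivative of $\ereg$ at $\lam=1$ and show it is strictly positive, so the minimum cannot be at the right endpoint. The paper carries out the computation in the full $N\times N$ matrix formulation via $E=\Emat$ and the derivative of $L$, arriving at diagonal entries $a_i = 1 - \tfrac{1}{R}\sum_{m\in\malSet}W_{im}>0$; your block-wise implicit differentiation of the fixed point on the regular coordinates yields the same trace $\tr{(I-\Creg)(I-W_{\regSet\regSet})}=R-s/R=\sum_i a_i$, so the two computations coincide (and both confirm the nice observation that the sign at $\lam=1$ is independent of the noise level $d$).
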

\begin{proof}
	Let us define the following matrices: $ \selreg \in\Real{R\times N} $ maps $ x $ to $ \xreg $,
	and $ \Creg\doteq\frac{1}{R}\one_R\one_R^\top $.
	According to~\eqref{eq:prior-covariance-outliers}, we set $ \selreg = \left[ \, I_R \, \vline \, 0 \, \right] $.
	Then, the error~\eqref{eq:cons-error-regular} can be written as
	\begin{equation}\label{eq:cons-error-regular-all-priors}
		\ereg = 
		\tr{\Var E^\top E}, \quad E \doteq \Emat
	\end{equation}
	and its derivative with respect to $ \lam $ is (up to \revision{multiplicative} constants)
	\begin{equation}\label{eq:cons-error-regular-derivative}
		\dfrac{d\ereg}{d\lam} = \dfrac{1}{\lam}\tr{\Var L^\top\left(I-W^\top L^\top\right)\selreg^\top E},
	\end{equation}
	which at $ \lam = 1 $ takes value
	\begin{equation}\label{eq:cons-error-regular-derivative-at-1}
		\dfrac{d\ereg}{d\lam}\Big|_{\lam=1} = \tr{\Var \left(I-W^\top\right)\selreg^\top \left(\selreg-\Creg\selreg\right)}.
	\end{equation}
	Straightforward computations show that 
	\begin{equation}\label{eq:cons-error-regular-derivative-at-1-arg}
		\Var \left(I-W^\top\right)\selreg^\top \left(\selreg-\Creg\selreg\right) = \left[\begin{array}{ c | c }
		A & 0 \\ 
		\hline
		\star & 0
		\end{array}\right],
	\end{equation}
	where the $ i $th diagonal element of $ A\in\Real{R\times R} $, associated with regular \node $ i\in\regSet $, is
	\begin{equation}\label{eq:cons-error-regular-derivative-at-1-trace-elem}
		a_i = 1 - \dfrac{1}{R}\sum_{m\in\malSet}W_{im} \ge 1 - \dfrac{1}{R} > 0.
	\end{equation}
	Being $ a_i > 0 \, \forall i\in\regSet$,
	the derivative~\eqref{eq:cons-error-regular-derivative-at-1} is positive,
	hence the consensus error~\eqref{eq:cons-error-regular} is strictly increasing
	in a left neighborhood of $ 1 $.
	In virtue of continuity of~\eqref{eq:cons-error-regular-derivative} for $ \lam > 0 $,
	all points of minimum of $ \ereg $ satisfy $ \lam^* < 1 $. 
\end{proof}
Interestingly, the error derivative at $ \lam=1 $ does not depend on the outlier noises,
but only on the (weighted) topology.
\begin{lemma}\label{lem:opt-lam-greater-0}
	The optimal parameter satisfies $ \lam^* > 0 $.
\end{lemma}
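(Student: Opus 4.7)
The plan is to mirror the argument of \autoref{lem:opt-lambda-less-1} at the opposite boundary: show that the right derivative of $\ereg$ at $\lam = 0^+$ is strictly negative. Combined with continuity of $\ereg$ on $[0,1]$ (\autoref{rem:L-limit}), this rules out $\lam^{*}=0$ as a minimizer and gives $\lam^{*}>0$.

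I would start from the derivative formula~\eqref{eq:cons-error-regular-derivative} and simplify it via the defining identity $(I-(1-\lam)W)L = \lam I$, which yields $WL = (L-\lam I)/(1-\lam)$ and hence $I - W^\top L^\top = (I - L^\top)/(1-\lam)$, so that
\[
\dfrac{d\ereg}{d\lam} = \dfrac{1}{\lam(1-\lam)}\tr{\Var L^\top (I - L^\top)\selreg^\top E}.
\]
The $1/\lam$ factor is only apparently singular. By \autoref{rem:L-limit}, $L\to P := \lim_k W^k$ as $\lam\to 0^+$, and by \cref{ass:mal-node-dynamics} $P$ is the absorption projector of the modified Markov chain, with malicious \nodes absorbing and regular \nodes transient. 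Since $P^2 = P$, the leading term $P^\top(I-P^\top)$ vanishes, so $L^\top(I-L^\top)=O(\lam)$ and the right derivative at $0^+$ is finite.

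To evaluate this limit, I would expand $L = P + \lam L'(0) + O(\lam^2)$ via a perturbation analysis of $\lam(I-(1-\lam)W)^{-1}$ around the removable singularity at $\lam=0$. Using the spectral decomposition $P=\sum_{m\in\malSet}e_m\pi_m^\top$, where each $\pi_m^\top$ collects absorption probabilities into malicious \node $m$, and separating $\Var = I_N + dV$, the derivative at $0^+$ reduces to a trace involving $E_0 := \selreg P - \Creg\selreg$ and the regular-to-malicious entries of $W$.

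The main obstacle is showing that this limiting trace is strictly positive. Intuitively, any $\lam>0$ anchors each regular \node partly to its own prior, pulling the steady state toward $\thbarreg$ and away from the pure-absorption limit $P\priorallout$, which should yield generic improvement. Making this rigorous requires proving that both summands of $\Var$ contribute with the correct sign: the $I_N$ block because $P$ discards all regular prior information, leaving first-order room for improvement, and the $dV$ block because $P$ spreads malicious noise across every regular \node, which injecting regular priors can only dilute. A fallback, less informative route would be to exhibit an explicit $\lam\in(0,1)$ with $\ereg(\lam) < \ereg(0^+)$ by combining continuity with~\eqref{eq:FJ-err-mal-lamda-1}, but the derivative approach is the cleanest and directly parallels \autoref{lem:opt-lambda-less-1}.
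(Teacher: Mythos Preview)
Your overall strategy---show $\dfrac{d\ereg}{d\lam}\big|_{\lam=0^+}<0$ and conclude via continuity---is exactly the one the paper uses, and your algebraic simplification $I-W^\top L^\top=(I-L^\top)/(1-\lam)$ (which is valid because $L$ and $W$ commute) is a clean way to see that the $1/\lam$ factor is only apparently singular. So the architecture of your argument matches the paper.

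The substantive gap is precisely the step you flag as ``the main obstacle'': you do not actually establish the sign of the limiting derivative. Your intuitive account (regular priors are discarded at $\lam=0$, malicious noise is spread everywhere, so a little anchoring helps) is plausible but is not a proof, and your fallback of exhibiting some $\lam$ with $\ereg(\lam)<\ereg(0^+)$ is not carried out either. Note also a sign slip in your write-up: you say the limiting trace should be ``strictly positive'', but you want the \emph{derivative} to be strictly negative; keep track of which quantity you are bounding. The paper closes this gap differently from your perturbation sketch: it recognizes that $\dfrac{1}{\lam}L^\top(I-W^\top L^\top)=\dfrac{dL^\top}{d\lam}$, defines $\Gamma\doteq\lim_{\lam\to0^+}\dfrac{dL}{d\lam}$, and then computes $\Gamma$ \emph{explicitly} from the spectral decomposition of the (modified) $W$. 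Concretely, it shows $W$ is diagonalizable with block structure~\eqref{eq:W-eigendecomposition}, so $\Gamma$ shares its eigenbasis with eigenvalues $\bar\sigma_1=0$ and $\bar\sigma_i=(1-\lambda_i)^{-1}$ for $i>1$; from this the paper reads off that the block $\Gamma_1$ is positive and $\Gamma_2$ is negative, which plugged into the limiting trace~\eqref{eq:cons-error-regular-derivative-limit} yields a strictly negative value. That explicit eigenvalue computation is the missing ingredient in your proposal; without it (or an equivalent concrete argument), the proof does not close.
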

\begin{proof}
	In virtue of continuity of the trace operator,
	we can compute the limit of the error derivative~\eqref{eq:cons-error-regular-derivative} as
	\begin{equation}\label{eq:cons-error-regular-derivative-limit}
		\begin{aligned}
			\lim_{\lam\rightarrow0^+}\dfrac{d\ereg}{d\lam} &= \tr{\Var\lim_{\lam\rightarrow0^+}\dfrac{dL}{d\lam}^\top\selreg^\top \lim_{\lam\rightarrow0^+}E}\\
														   &= \tr{\Var\Gamma^\top\selreg^\top\left(\selreg\overline{W}-\Creg\selreg\right)}\\
														   &= \tr{\Var\Gamma^\top\selreg^\top\big[\!-\Creg \,| \,0\big]}\\
														   &= \tr{-\Gamma_{1}^\top\Creg} + \tr{\Gamma_{2}^\top C_{RM}},
		\end{aligned}
	\end{equation}
	where we have used the definitions
	\begin{equation}\label{eq:def-gamma-Wbar}
		\Gamma \doteq \lim_{\lam\rightarrow0^+}\dfrac{dL}{d\lam}, \quad \overline{W} \doteq \lim_{\lam\rightarrow0^+}L, \quad C_{RM} \doteq \frac{\one_R\one_M^\top}{M},
	\end{equation}
	and block partitions (cf.~\cref{ass:mal-node-dynamics} for the value of $ \overline{W} $)
	\begin{equation}\label{eq:Wbar}
		\Gamma = \left[\begin{array}{ c | c }
			\Gamma_{1} & \Gamma_{2} \\ 
			\hline
			0 & 0
		\end{array}\right], \ \overline{W} = \left[\begin{array}{ c | c }
			0 & C_{RM} \\ 
			\hline
			0 & I_M
		\end{array}\right].
	\end{equation}
	Matrix $ \Gamma $ can be computed exactly from the spectral decomposition of $ W $
	(\revision{see Appendix for detailed derivation}).
	In particular, its elements are finite, 
	$ \Gamma_{1} $ is positive and $ \Gamma_{2} $ is negative.
	Hence,~\eqref{eq:cons-error-regular-derivative-limit} is negative
	and $ \ereg $ is strictly decreasing in a right neighborhood of $ \lam = 0 $.
	In virtue of continuity of~\eqref{eq:cons-error-regular-derivative}, we conclude $ \lam^* > 0 $.
\end{proof}

\begin{rem}[Optimal $ \lam $ without outliers]
	\cref{lem:opt-lam-greater-0} implies that $ \lam^* $ is always strictly positive, 
	even with zero outlier noise,
	which may seem counterintuitive.
	This is because 
	the malicious \nodes misbehave
	regardless of their noise level. 
\end{rem}

Lemmas~\ref{lem:opt-lambda-less-1}--\ref{lem:opt-lam-greater-0} are summarized in the following proposition.
\begin{prop}\label{prop:opt-lam-nontrivial}
	The optimal parameter satisfies $ \lam^* \in  (0,1) $.
\end{prop}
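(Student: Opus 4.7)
The plan is essentially trivial at this stage, since the proposition is an immediate corollary of the two lemmas just established. Existence of a minimizer $\lam^* \in [0,1]$ has already been argued in the paragraph preceding Lemma~\ref{lem:opt-lambda-less-1}, by combining continuity of $\ereg$ on $(0,1]$ with its continuous extension at $\lam = 0$ (justified by Remark~\ref{rem:L-limit}) and the Weierstrass theorem on the compact interval $[0,1]$. So the only work left is to rule out the two endpoints.

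To exclude $\lam^* = 1$, I would invoke Lemma~\ref{lem:opt-lambda-less-1}: since $d\ereg/d\lam$ is strictly positive at $\lam = 1$, the function $\ereg$ is strictly increasing on a left neighborhood of $1$, and therefore no point in such a neighborhood, including $\lam = 1$ itself, can attain the minimum. Symmetrically, to exclude $\lam^* = 0$, I would invoke Lemma~\ref{lem:opt-lam-greater-0}: the right-hand limit of $d\ereg/d\lam$ at $\lam = 0$ is strictly negative, so $\ereg$ is strictly decreasing on a right neighborhood of $0$, ruling out $\lam = 0$ as a minimizer. Combining the two exclusions yields $\lam^* \in (0,1)$.

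There is no real obstacle here beyond noting that both lemmas deliver \emph{strict} sign information on the derivative at the respective endpoints, which is exactly what is needed to push the minimizer into the open interval; any nonstrict inequality would have left open the possibility of boundary optima. The broader conceptual point worth emphasizing in the write-up is that this proposition is what makes the whole ``competition-based resilience'' thesis interesting: neither full collaboration ($\lam = 0$) nor full competition ($\lam = 1$) is optimal, so a genuine hybrid strategy strictly outperforms both extremes.
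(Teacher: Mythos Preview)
Your proposal is correct and matches the paper's own treatment: the proposition is stated immediately after Lemmas~\ref{lem:opt-lambda-less-1}--\ref{lem:opt-lam-greater-0} as a direct summary of them, with no separate proof given. Your write-up is, if anything, slightly more explicit than the paper about why the two strict derivative signs rule out the endpoints and how Weierstrass guarantees existence on $[0,1]$.
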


\subsubsection{Consensus Error \emph{{vs.}} Noise}\label{sec:opt-lam-vs-d}
We now study how performance is affected by malicious \node
noise variance $ d $.
	
We first prove an intuitive result,
\ie larger outlier noise induces larger consensus error.
\begin{prop}\label{prop:error-increases-with-d}
	The error $ \ereg $ is strictly increasing with $ d $.
\end{prop}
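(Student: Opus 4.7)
The plan is to observe that, by~\eqref{eq:cons-error-regular-all-priors} and the decomposition $\Var = I_N + dV$ given in~\eqref{eq:prior-covariance-outliers}, the error is affine in $d$:
\[
\ereg(\lam,d) \;=\; \tr{E^\top E} + d\,\tr{V E^\top E}.
\]
Strict monotonicity then reduces to showing the coefficient $\tr{V E^\top E}$ is strictly positive. Since $V$ is the orthogonal projector onto the coordinates indexed by $\malSet$ (so $V=V^\top=V^2$), one can rewrite
\[
\tr{V E^\top E} \;=\; \tr{(EV)^\top(EV)} \;=\; \|EV\|_F^2,
\]
which is nonnegative and vanishes iff $EV=0$.

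To pin down $EV$, note that $\selreg V = 0$ (the selector discards the malicious coordinates), so $EV = \selreg L V$ is exactly the $R\times M$ block of $L$ with rows in $\regSet$ and columns in $\malSet$. Under~\cref{ass:mal-node-dynamics}, $W$ is block upper-triangular with the identity in the malicious positions; inverting $I-(1-\lam)W$ block-wise and then multiplying by $\lam$ gives
\[
\selreg L V \;=\; (1-\lam)\bigl(I-(1-\lam)W_\regSet\bigr)^{-1} W_{\regSet\malSet},
\]
where $W_\regSet$ and $W_{\regSet\malSet}$ denote, respectively, the regular-to-regular and malicious-to-regular blocks of $W$.

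The main obstacle is verifying that this block is nonzero for every $\lam\in(0,1)$. I would argue in two steps. First, $W_{\regSet\malSet}\neq 0$: irreducibility of the original $W$ from~\cref{ass:W-doubly-stochastic}, together with $\malSet\neq\emptyset$, forces the existence of at least one edge joining $\regSet$ and $\malSet$. Second, because the rows of $W_\regSet$ are subsums of rows of the doubly stochastic $W$, $W_\regSet$ has row sums at most $1$, so $(1-\lam)W_\regSet$ has spectral radius at most $1-\lam<1$; hence $(I-(1-\lam)W_\regSet)^{-1}=\sum_{k\ge0}(1-\lam)^k W_\regSet^k$ is well-defined and entrywise nonnegative. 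Multiplying this nonzero nonnegative matrix by the nonzero nonnegative factor $(1-\lam)W_{\regSet\malSet}$ yields a nonzero nonnegative matrix, so $\|EV\|_F^2>0$ and $\partial\ereg/\partial d>0$, which proves the claim. Note that at $\lam=1$ the prefactor $(1-\lam)$ vanishes and $\ereg$ becomes independent of $d$, consistently with~\eqref{eq:FJ-err-mal-lamda-1}; the statement is therefore meaningful on the regime $\lam\in(0,1)$ identified in~\cref{prop:opt-lam-nontrivial}.
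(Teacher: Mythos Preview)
Your proof is correct and follows essentially the same route as the paper: both compute $\partial\ereg/\partial d=\tr{VE^\top E}$ and identify it with $\tr{L_{12}^\top L_{12}}=\|L_{12}\|_F^2$, concluding strict positivity for $\lam\in(0,1)$. Your presentation is slightly more polished---you exploit $V=V^\top=V^2$ to write $\tr{VE^\top E}=\|EV\|_F^2$ directly, and you explicitly justify $L_{12}\neq0$ via irreducibility of the underlying graph and the Neumann series (the invertibility of $(I-(1-\lam)W_\regSet)^{-1}$ is actually what makes the product nonzero), whereas the paper simply asserts $\tr{L_{12}^\top L_{12}}>0$ after computing the block structure of $E^\top E$.
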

\begin{proof}
	From,~\eqref{eq:cons-error-regular-all-priors} the partial derivative of $ \ereg $ is
	\begin{equation}\label{eq:cons-error-regular-partial-d}
		\dfrac{\partial\ereg}{\partial d} = \tr{VE^\top E}.
	\end{equation}
	Defining the following block partition of $ L $,
	\begin{equation}\label{eq:block-matrix-L}
		L = \left[\begin{array}{ c | c }
			L_{11} & L_{12} \\ 
			\hline
			0 & I_M
		\end{array}\right], \quad L_{11}\in\Real{R\times R}, 
	\end{equation}
	it follows
	\begin{equation}\label{eq:cons-error-regular-partial-d-matrix}
		E^\top E = \left[\begin{array}{ c | c }
		(L_{11} - \Creg)^2 & (L_{11} - \Creg)L_{12} \\ 
		\hline
		L_{12}^\top(L_{11} - \Creg) & L_{12}^\top L_{12}
		\end{array}\right],
	\end{equation}
	and hence $ \tr{VE^\top E} = \tr{L_{12}^\top L_{12}} > 0 $ for $ \lam < 1 $.
	It follows that $ \ereg $ is strictly increasing with $ d $  for all $ \lam < 1 $.
\end{proof}
Intuitively, the more the nominal consensus behavior is disrupted by external attacks,
the more it is convenient for regular \nodes to stick to their own priors
rather than risking to collaborate with malicious \nodes.
Formally speaking, this would require the optimal parameter $ \lam^* $
to increase with the noise level $ d $ and the number of malicious \nodes $ M $.
While the latter behavior cannot be assessed analytically because of the 
discrete nature of \nodes,
the former behavior could be proven
if $ \lam^* $ was unique.
Such a claim is hard to prove because of the structure of the cost function.
In particular, studying the second derivative of $ \ereg $
is complicated by the fact that the trace argument is not positive semidefinite,
and similarly uniqueness of the root of~\eqref{eq:cons-error-regular-derivative}
cannot be assessed in general.
However, the next result contributes towards this intuition,
which is confirmed numerically in~\autoref{sec:numerical-tests-FJ-err}.

\begin{prop}\label{prop:opt-lam-vs-d}
	The critical points of $ \ereg $ are strictly increasing with $ d $.
\end{prop}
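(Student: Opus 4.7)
My plan is to apply the implicit function theorem to the first-order condition $\partial e_\regSet/\partial\lam = 0$ and verify that the induced derivative $d\lam^*/dd$ is strictly positive at every critical point. The key preliminary observation is that, because $\Var = I_N + dV$, the cost is affine in $d$:
\begin{equation*}
e_\regSet(\lam,d) = h(\lam) + d\,g(\lam), \quad h(\lam)\doteq\tr{E^\top E}, \quad g(\lam)\doteq\tr{VE^\top E},
\end{equation*}
and the block computation~\eqref{eq:cons-error-regular-partial-d-matrix} used in the proof of~\cref{prop:error-increases-with-d} already identifies $g(\lam) = \|L_{12}(\lam)\|_F^2$, which is strictly positive on $[0,1)$ and vanishes at $\lam=1$. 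With this decomposition, the critical-point equation reads $h'(\lam) + d\,g'(\lam) = 0$, and the implicit function theorem yields
\begin{equation*}
\frac{d\lam^*}{dd} = -\frac{g'(\lam^*)}{\partial^2 e_\regSet/\partial\lam^2(\lam^*,d)},
\end{equation*}
so the claim reduces to controlling the signs of $g'(\lam^*)$ and of the second $\lam$-derivative of $e_\regSet$ at $\lam^*$.

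I would then aim at the stronger structural claim that $g$ is strictly decreasing on $(0,1)$. To this end, I would exploit the block form of $W$ imposed by~\cref{ass:mal-node-dynamics}: since malicious rows are absorbing, a direct block inversion gives
\begin{equation*}
L_{12}(\lam) = (1-\lam)\bigl(I_R - (1-\lam)W_{\regSet\regSet}\bigr)^{-1}W_{\regSet\malSet},
\end{equation*}
so that $g(\lam) = (1-\lam)^2\|B(\lam)W_{\regSet\malSet}\|_F^2$ with $B(\lam) = (I_R-(1-\lam)W_{\regSet\regSet})^{-1}$. Diagonalizing the symmetric principal submatrix $W_{\regSet\regSet}$ in an orthonormal basis expresses $g(\lam)$ as a nonnegative-weighted sum of rational terms of the form $(1-\lam)^2/(1-(1-\lam)\mu_i)^2$, one per eigenvalue $\mu_i$ of $W_{\regSet\regSet}$. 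A one-variable derivative calculation (say, in the change of variables $u=1-\lam$) shows that each such term is strictly decreasing in $\lam$ on $(0,1)$, and this passes to the sum, yielding $g'(\lam)<0$ on $(0,1)$. Combined with the second-order condition $\partial^2 e_\regSet/\partial\lam^2(\lam^*,d)>0$ at strict local minima, this gives $d\lam^*/dd>0$.

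The main obstacle is handling critical points where the denominator $\partial^2 e_\regSet/\partial\lam^2(\lam^*,d)$ could vanish or fail to be positive, \ie non-strict or non-minimum critical points, for which the implicit function theorem either fails or flips the sign. At strict local minima (which, given the structure highlighted in~\cref{prop:opt-lam-nontrivial}, should generically include $\lam^*$) the argument above applies verbatim. For the degenerate cases I would use a continuity argument: since $g'<0$ depends only on the topology while $d$ enters $e_\regSet$ linearly, degenerate critical points can occur only on a measure-zero set of $d$, and strict monotonicity of the critical locus in the $(\lam,d)$ plane is recovered by density. A secondary technicality is verifying that $I_R-(1-\lam)W_{\regSet\regSet}$ is invertible for $\lam\in(0,1)$, which follows from the spectral radius of the principal submatrix $W_{\regSet\regSet}$ being at most one.
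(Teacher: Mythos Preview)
Your approach is correct and arrives at the same key fact as the paper---the mixed partial $\partial^2 e_\regSet/(\partial d\,\partial\lam)<0$ on $(0,1)$---but by a genuinely different computation. The paper obtains this by writing the mixed partial as $\lam^{-1}\tr{LVL^\top(I-W^\top L^\top)\selreg^\top\selreg}$ and then showing, via the block partitions~\eqref{eq:block-matrix-L} and~\eqref{eq:block-matrix-W}, that the only block contributing to the trace, namely $-L_{12}(L_{12}^\top W_{11}^\top + W_{12}^\top)L_{11}^\top$, is \emph{entrywise} nonpositive; this uses nothing beyond the nonnegativity of $L_{11},L_{12},W_{11},W_{12}$. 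You instead identify that same mixed partial with $g'(\lam)$, express $g(\lam)=\|L_{12}\|_F^2$ explicitly through the block inversion $L_{12}=(1-\lam)(I_R-(1-\lam)W_{\regSet\regSet})^{-1}W_{\regSet\malSet}$, and diagonalize the symmetric principal submatrix $W_{\regSet\regSet}$ to reduce $g$ to a nonnegative combination of scalar terms $(1-\lam)^2/(1-(1-\lam)\mu_i)^2$, each strictly decreasing on $(0,1)$. Your route is more analytic and yields the slightly sharper statement that $g$ itself is monotone; the paper's route is purely sign-pattern based and avoids any spectral decomposition. On the logical side you are in fact more explicit than the paper: you invoke the implicit function theorem and correctly flag that the conclusion $d\lam^*/dd>0$ also needs $\partial^2 e_\regSet/\partial\lam^2>0$, guaranteed only at strict local minima, whereas the paper passes directly from ``mixed partial negative'' plus continuity to ``critical points increase'' without isolating this second-order condition. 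Your density patch for degenerate critical points is reasonable but informal; note that the paper leaves exactly this point implicit and, in the discussion immediately following the proposition, restricts attention to local minima and to the case of a unique critical point. One minor remark: invertibility of $I_R-(1-\lam)W_{\regSet\regSet}$ for $\lam\in(0,1)$ follows simply from $\|(1-\lam)W_{\regSet\regSet}\|\le 1-\lam<1$, so you do not need the stronger bound $\rho(W_{\regSet\regSet})<1$ except at $\lam=0$, which is outside your range.
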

\begin{proof}
	We start by computing partial derivatives of the error,
	first with respect to $ \lam $ (cf.~\eqref{eq:cons-error-regular-derivative}) and then with respect to $ d $,
	\begin{equation}\label{eq:cons-error-derivative-lam-d}
		\dfrac{\partial\ereg}{\partial d\partial\lam} = \dfrac{1}{\lam}\tr{LVL^\top\left(I-W^\top L^\top\right)\selreg^\top \selreg},
	\end{equation}
	where we use $ \selreg V = 0 $.
	Defining the block partition
	\begin{equation}\label{eq:block-matrix-W}
		W = \left[\begin{array}{ c | c }
		W_{11} & W_{12} \\ 
		\hline
		0 & I_M
		\end{array}\right], \quad W_{11}\in\Real{R\times R}, 
	\end{equation}
	it follows
	\begin{gather}
		LV = \left[\begin{array}{ c | c }
		0 & L_{12} \\ 
		\hline
		0 & I_M
		\end{array}\right] \label{eq:other-block-matrices}\\
		M \doteq I-W^\top L^\top = \left[\begin{array}{ c | c }
		I_R - W_{11}^\top L_{11}^\top & 0 \\ 
		\hline
		-W_{12}^\top L_{11}^\top - L_{12}^\top & 0
		\end{array}\right] \label{eq:M-block-matrix} \\
		L^\top M\selreg^\top\selreg = \left[\begin{array}{ c | c }
		\star & 0 \\ 
		\hline
		-L_{12}^\top W_{11}^\top L_{11}^\top - W_{12}^\top L_{11}^\top & 0
		\end{array}\right]
	\end{gather}
	and the argument of the trace in~\eqref{eq:cons-error-derivative-lam-d} is
	\begin{equation}\label{eq:cons-error-derivative-lam-d-argument}
		\left[\begin{array}{ c | c }
		-L_{12}L_{12}^\top W_{11}^\top L_{11}^\top - L_{12}W_{12}^\top L_{11}^\top & \star \\ 
		\hline
		\star & 0
		\end{array}\right]
	\end{equation}
	whose upper-left block is a negative matrix for all $ \lam\in(0,1) $,
	and is the zero matrix for $ \lam = 1 $.
	Hence, the error derivative with respect to $ \lam $~\eqref{eq:cons-error-regular-derivative}
	is strictly decreasing with $ d $ for all $ \lam\in(0,1) $.
	In virtue of continuity of~\eqref{eq:cons-error-regular-derivative} in $ \lam $,
	we conclude that the critical points of $ \ereg $ are strictly increasing with $ d $.
\end{proof}

\cref{prop:opt-lam-vs-d} implies that the points of local minimum
increase strictly monotonically with $ d $.
A direct consequence of this fact is that,
if there is a unique critical point
(and therefore a unique point of minimum) for one value of $ d $,
then such a point is always unique
and it is strictly increasing with $ d $.
In words, this entails that higher noise intensity
forces the regular \nodes to progressively reduce collaboration with others,
trusting more their own prior instead.
The next corollary refines this result
by describing the limit behavior of critical points
as the noise intensity grows unbounded.

\begin{cor}
	For any critical point $ \lam_{\text{CR}} $ of $ \ereg $ it holds $ \lim_{d\rightarrow+\infty}\lam_{\text{CR}} = 1 $.
\end{cor}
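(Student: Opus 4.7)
The plan is to combine the strict monotonicity established in Proposition~\ref{prop:opt-lam-vs-d} with a continuity argument at $\lam = 1$. I would first exploit the affine dependence of $\ereg$ on $d$ coming from~\eqref{eq:cons-error-regular-all-priors} and~\eqref{eq:prior-covariance-outliers} to write
\begin{equation*}
\ereg(\lam, d) = p(\lam) + d\, q(\lam), \quad p(\lam) \doteq \tr{E^\top E}, \; q(\lam) \doteq \tr{VE^\top E},
\end{equation*}
so that the critical-point condition $\partial \ereg / \partial \lam = 0$ becomes $p'(\lam) + d\, q'(\lam) = 0$. The key structural fact, already embedded in the proof of Proposition~\ref{prop:opt-lam-vs-d} via the block decomposition~\eqref{eq:cons-error-derivative-lam-d-argument}, is that $q'(\lam) < 0$ for every $\lam \in (0,1)$ and $q'(1) = 0$, while both $p'$ and $q'$ are smooth on $(0, 1]$ since $L$ is.

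I would then take an arbitrary sequence $d_n \to +\infty$ together with associated critical points $\lam_n$ of $\ereg(\cdot, d_n)$. By Proposition~\ref{prop:opt-lam-nontrivial} each $\lam_n$ lies in $(0,1)$, and by the strict monotonicity of Proposition~\ref{prop:opt-lam-vs-d} the sequence $\lam_n$ is strictly increasing, hence convergent to some $\lam^\infty \in (0,1]$. Solving the critical-point equation for $d_n$ yields
\begin{equation*}
d_n = -\frac{p'(\lam_n)}{q'(\lam_n)}.
\end{equation*}

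The concluding step is a contradiction: if $\lam^\infty < 1$, then by continuity $q'(\lam_n) \to q'(\lam^\infty) < 0$ and $p'(\lam_n) \to p'(\lam^\infty)$ is finite, so the right-hand side above converges to the finite quantity $-p'(\lam^\infty)/q'(\lam^\infty)$, contradicting $d_n \to \infty$. Therefore $\lam^\infty = 1$, which is the claim. The main obstacle reduces to making explicit the sign structure of $q'$, namely $q' < 0$ throughout $(0,1)$ with a single vanishing at $\lam = 1$; this is already implicit in the negativity (respectively, nullity) of the upper-left block in~\eqref{eq:cons-error-derivative-lam-d-argument}, so beyond spelling it out the argument is a routine continuity-plus-monotonicity contradiction.
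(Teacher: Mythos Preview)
Your proof follows the same route as the paper: both isolate the affine dependence of $\partial\ereg/\partial\lam$ on $d$, use the fact (from the proof of Proposition~\ref{prop:opt-lam-vs-d}) that the $d$-coefficient $q'(\lam)$ is strictly negative on $(0,1)$ and vanishes at $\lam=1$, and conclude that the relation $d=-p'(\lam)/q'(\lam)$ forces critical points toward $1$ as $d\to\infty$. The paper phrases this as ``for any fixed $\lam<1$ there is a threshold on $d$ beyond which the derivative is negative'' (its~\eqref{eq:cons-error-regular-derivative-negative} is exactly your $-p'/q'$), whereas you run the equivalent sequence-and-contradiction version.

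One small point to tighten: your appeal to Proposition~\ref{prop:opt-lam-vs-d} to deduce that an \emph{arbitrary} selection $\lam_n$ of critical points is strictly increasing is not quite licensed, since uniqueness of the critical point is explicitly left open in the paper (and, via the implicit function theorem, local maxima would move the other way). This is cosmetic rather than substantive: your contradiction goes through unchanged if you drop monotonicity, pass to a convergent subsequence of $(\lam_n)$ in $[0,1]$ by compactness, and rule out the limit $0$ using the finiteness and strict negativity of $q'(0^+)$ implicit in Lemma~\ref{lem:opt-lam-greater-0}. With that adjustment the argument is complete and matches the paper's.
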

\begin{proof}
	We first expand~\eqref{eq:cons-error-regular-derivative} to highlight dependence on $ d $:
	\begin{align}\label{eq:cons-error-regular-derivative-d}
		\begin{split}
			\dfrac{d\ereg}{d\lam} = \dfrac{d}{\lam}\tr{-L_{12}\left(L_{12}^\top W_{11}^\top + W_{12}^\top \right) L_{11}^\top} + k(\lam).
		\end{split}
	\end{align}
	It follows that there always exists $ d > 0 $ such that
	the error derivative is negative, for any $ \lam < 1 $.
	In fact, given $ \lam $, the minimal such value of $ d $ is computed as
	\begin{equation}\label{eq:cons-error-regular-derivative-negative}
		d > \lam k(\lam)\tr{L_{12}\left(L_{12}^\top W_{11}^\top L_{11}^\top + W_{12}^\top L_{11}^\top\right)}\inv > 0.
	\end{equation}
	The claim follows by combining this with~\cref{prop:opt-lam-vs-d}.
\end{proof}


\section{Numerical Experiments}\label{sec:numerical-tests-FJ-err}

\begin{figure}
	\centering
	\begin{subfigure}{0.5\linewidth}
		\centering
		\includegraphics[width=.99\linewidth]{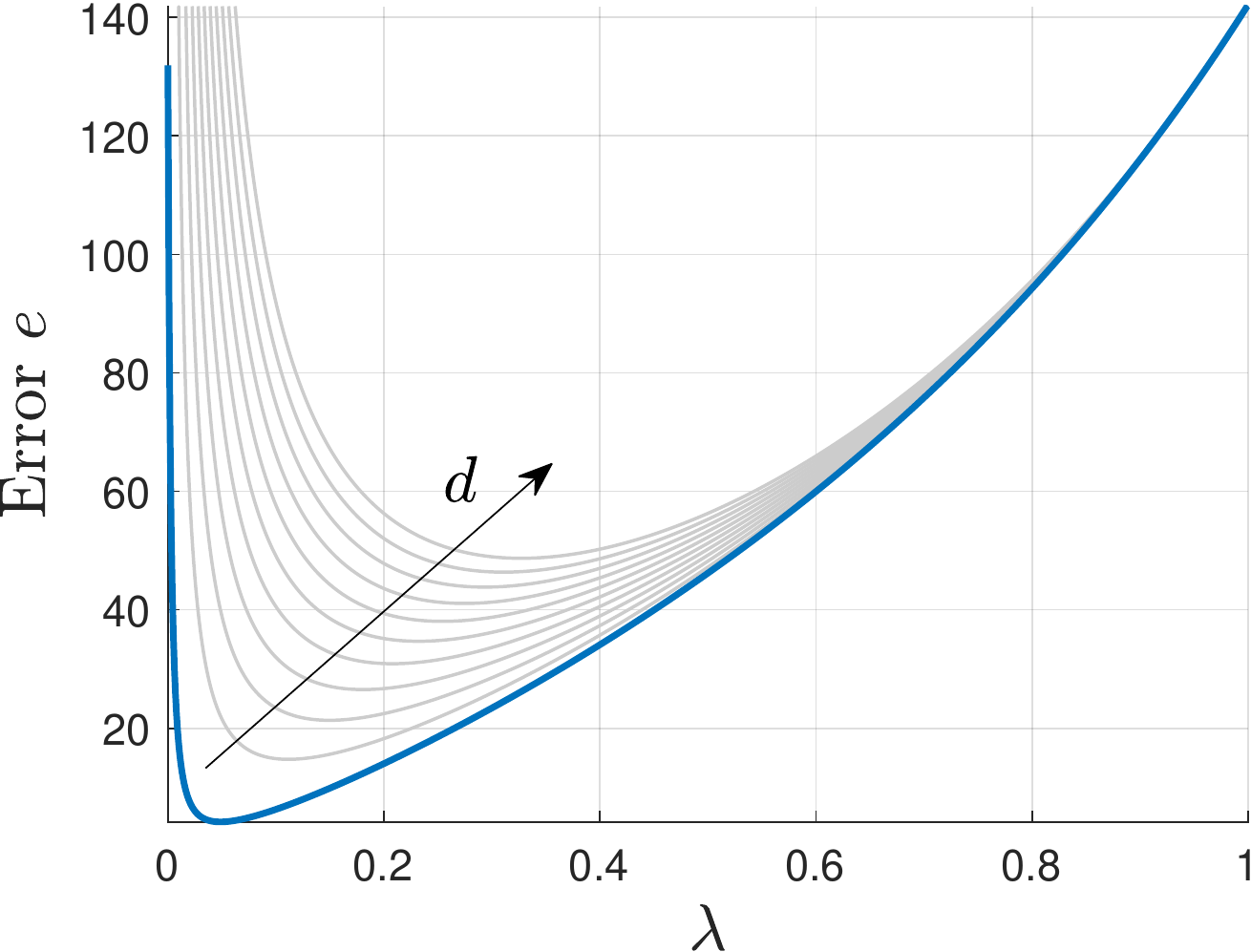}
		\caption{Average consensus error~\eqref{eq:cons-error-regular}}
		\label{fig:err_FJ_reg(3)_d10-100_1mal_curve}
	\end{subfigure}%
	\begin{subfigure}{0.5\linewidth}
		\centering
		\includegraphics[width=.99\linewidth]{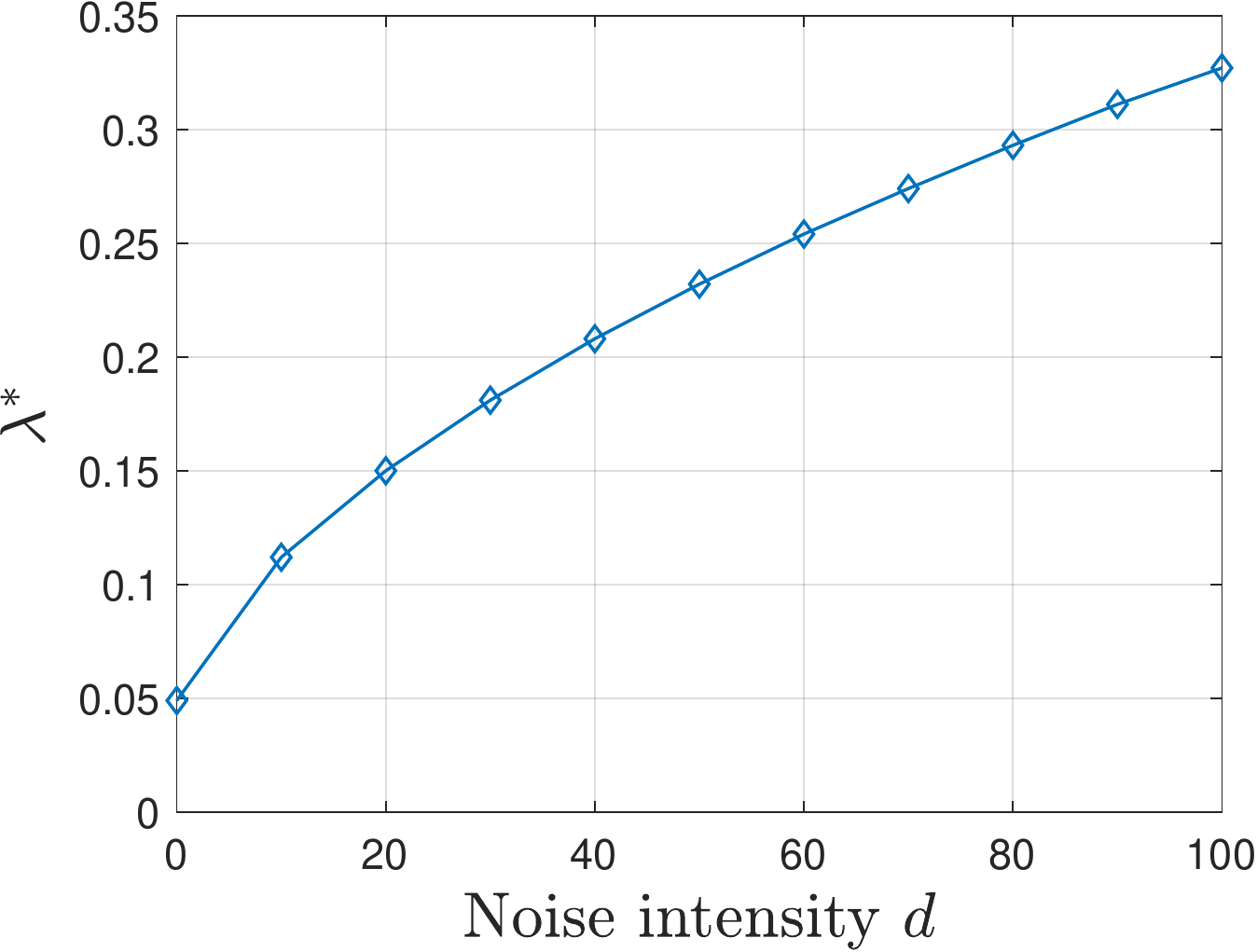}
		\caption{Optimal $ \lam $ as a function of $ d $.}
		\label{fig:err_FJ_reg(3)_d10-100_1mal_opt_lam}
	\end{subfigure}
	\caption{Consensus error of FJ dynamics with $ 3 $-regular graph, $ N = 100 $, one malicious \node, and $ d\in\{0,10,\dots,100\} $.
		The arrow in the left box shows how the error curve varies as the outlier noise intensity $ d $ increases.
	}
	\label{fig:err_FJ_reg(3)_d10-100_1mal}
\end{figure}

In this section,
we compute numerically the consensus error function $ \ereg $~\eqref{eq:cons-error-regular}
in a variety of scenarios,
to see how the FJ dynamics performs under different attacks.

In~\autoref{fig:err_FJ_reg(3)_d10-100_1mal},
we can see the error behavior as the noise intensity varies.
In particular, 
we use a $ 3 $-regular communication graph with $ 100 $ \nodes
and uniform weights. 
Further, 
we select one malicious \node with $ d \in \{0,10,\dots,100\} $.
All error curves in~\autoref{fig:err_FJ_reg(3)_d10-100_1mal_curve} exhibit a unique point of minimum $ \lam^* $
and increase monotonically, 
according to~\cref{prop:error-increases-with-d}.
Further, 
$ \lam^* $ grows with $ d $ (\autoref{fig:err_FJ_reg(3)_d10-100_1mal_opt_lam})
according to~\cref{prop:opt-lam-vs-d}.

\begin{figure}
	\centering
	\begin{subfigure}{0.5\linewidth}
		\centering
		\includegraphics[width=.99\linewidth]{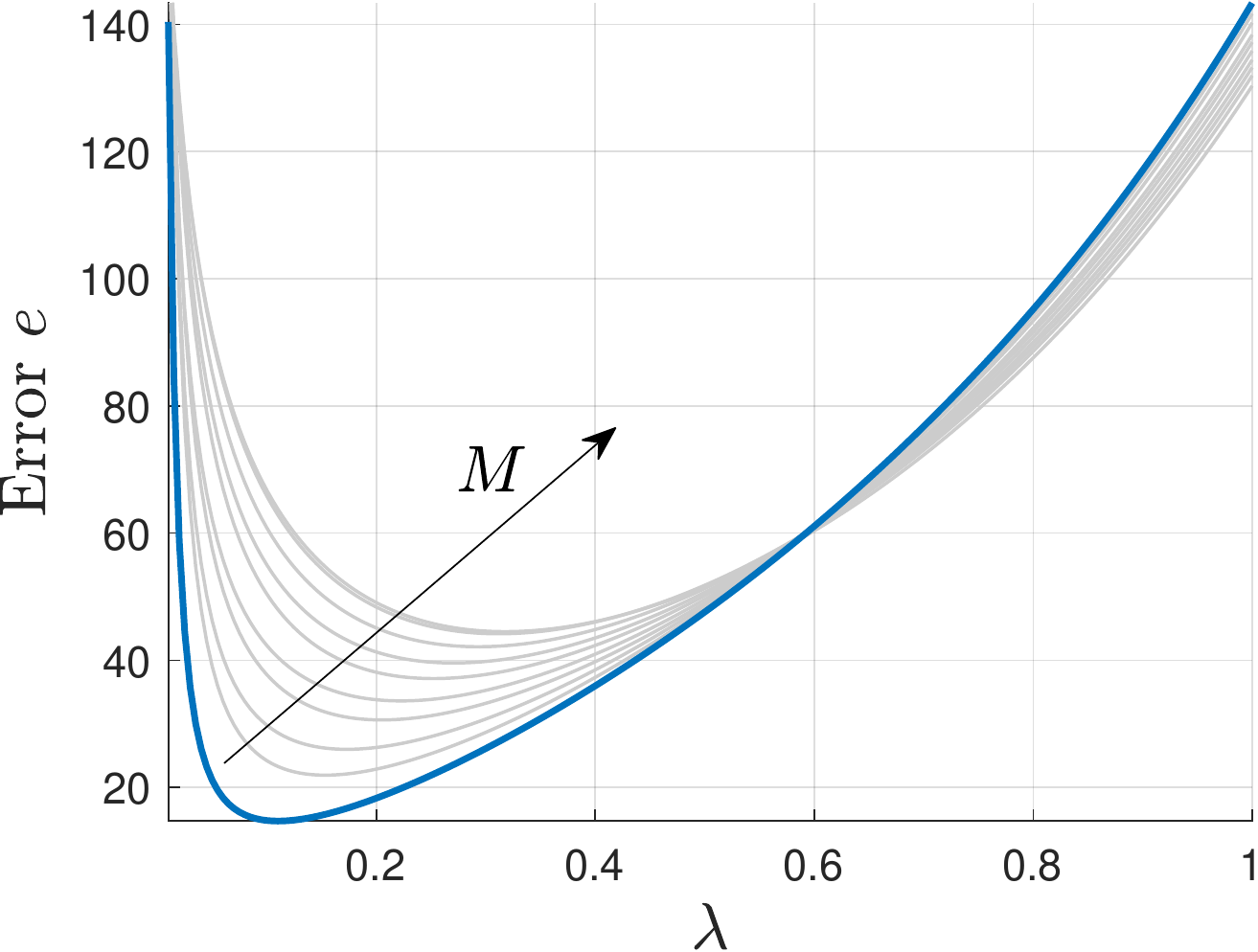}
		\caption{Average consensus error~\eqref{eq:cons-error-regular}}
		\label{fig:err_FJ_reg(3)_d10_mal1-10_curve}
	\end{subfigure}%
	\begin{subfigure}{0.5\linewidth}
		\centering
		\includegraphics[width=.99\linewidth]{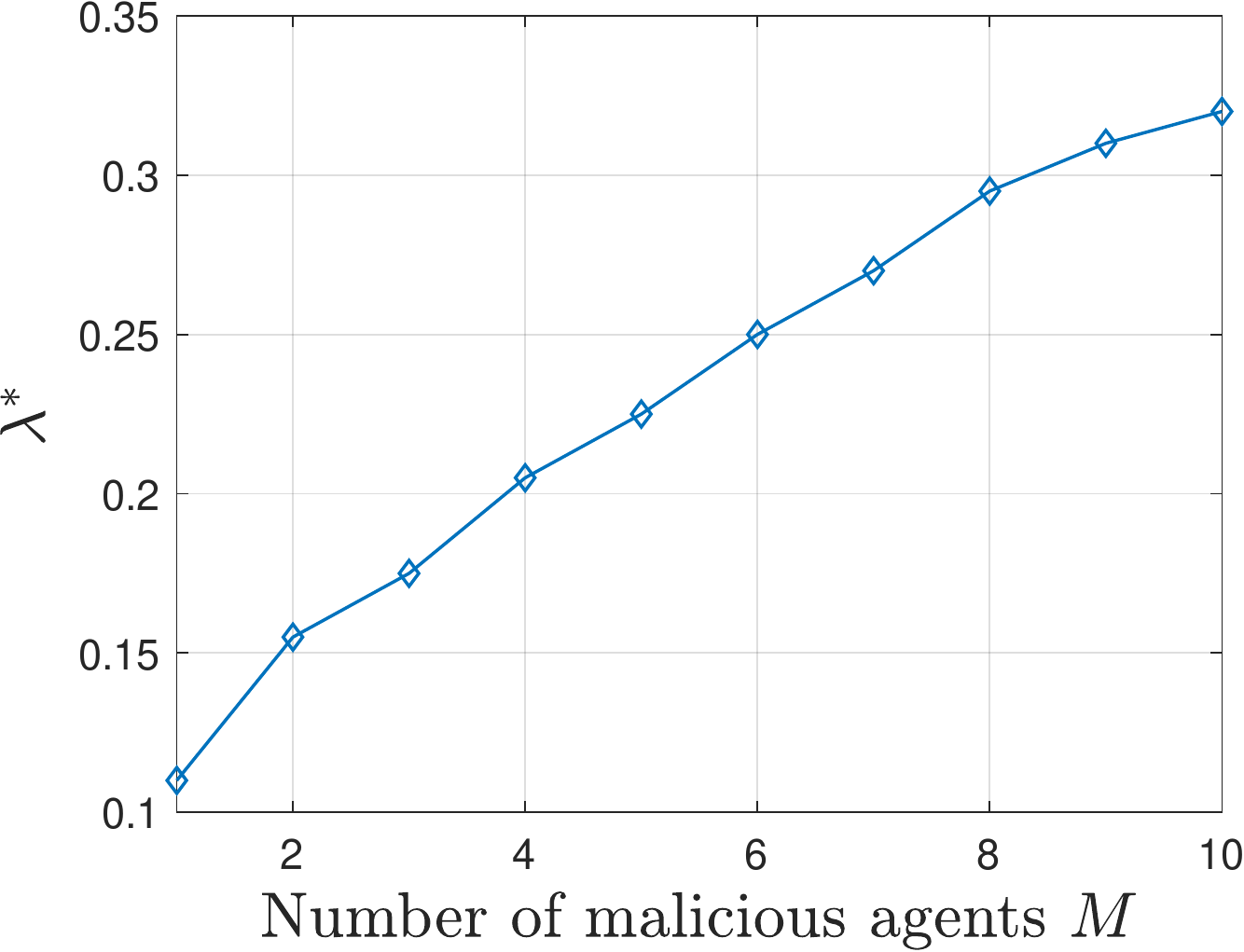}
		\caption{Optimal $ \lam $ as a function of $ M $.}
		\label{fig:err_FJ_reg(3)_d10_mal1-10_opt_lam}
	\end{subfigure}
	\caption{Consensus error of FJ dynamics with $ 3 $-regular graph, $ d = 10 $, $ N = 100 $, and $ M\in\{1,\dots,10\} $ .
		The arrow in the left box shows how the error varies as more regular \nodes turn malicious. 
	}
	\label{fig:err_FJ_reg(3)_d10_mal1-10}
\end{figure}
\begin{figure}
	\centering
	\begin{subfigure}{0.5\linewidth}
		\centering
		\includegraphics[width=.99\linewidth]{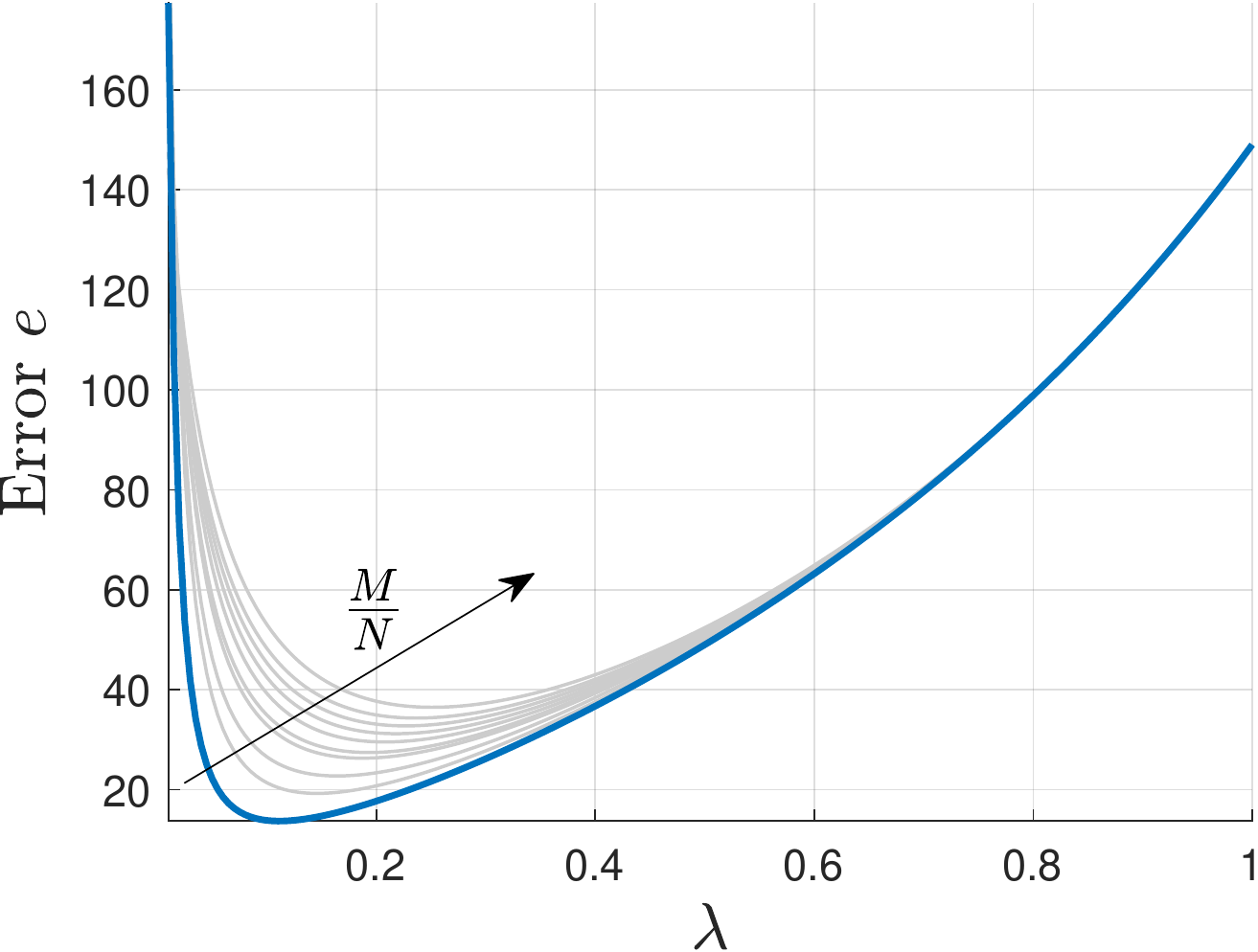}
		\caption{Average consensus error~\eqref{eq:cons-error-regular}}
		\label{fig:err_FJ_reg(3)_d10_mal1-10_reg100_curve}
	\end{subfigure}%
	\begin{subfigure}{0.5\linewidth}
		\centering
		\includegraphics[width=.99\linewidth]{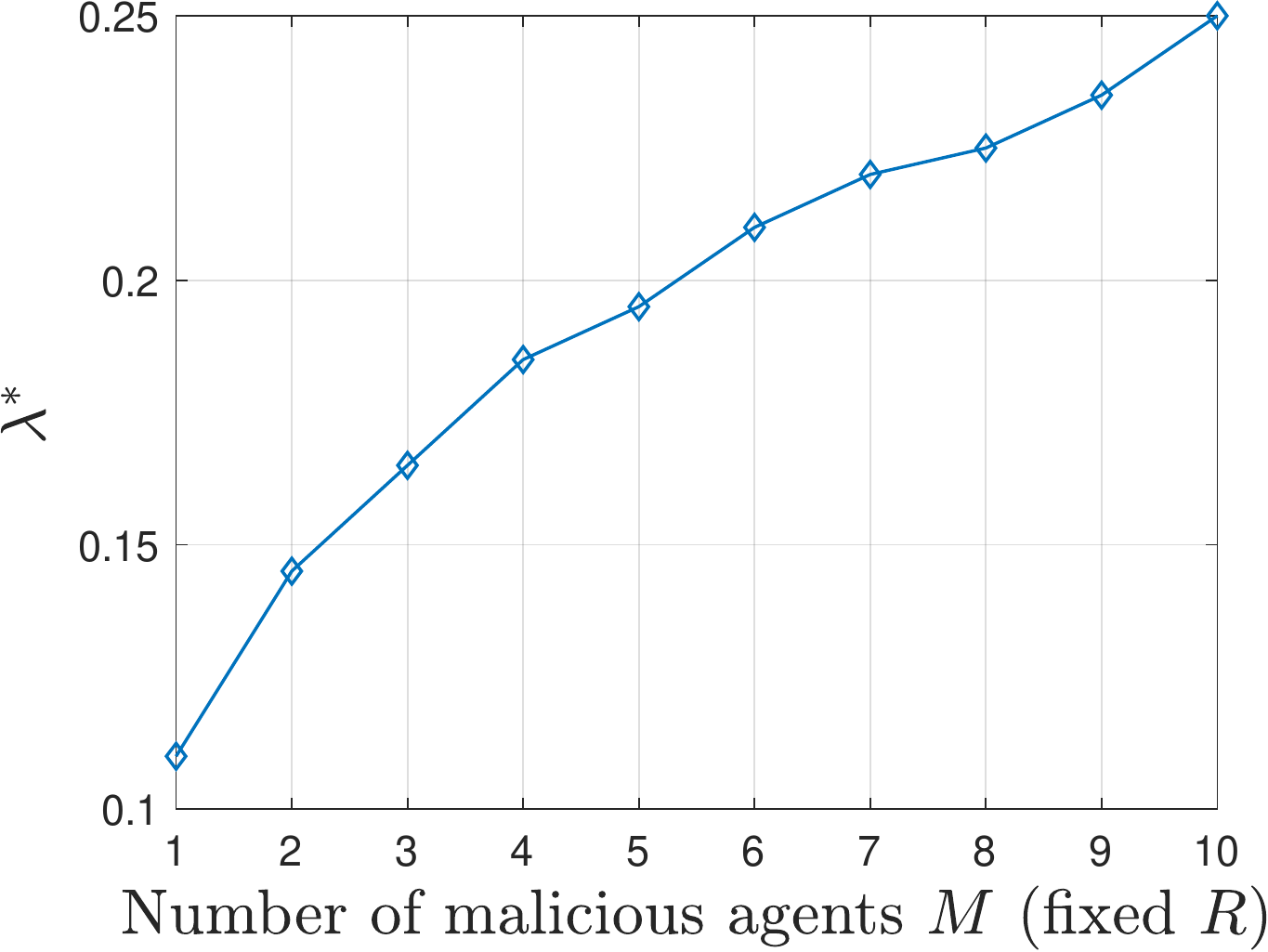}
		\caption{Optimal $ \lam $ as a function of $ M $.}
		\label{fig:err_FJ_reg(3)_d10_mal1-10_reg100_opt_lam}
	\end{subfigure}
	\caption{Consensus error of FJ dynamics with $ 3 $-regular graph, $ d = 10 $, $ R = 100 $, and $ M\in\{1,\dots,10\} $.
		The arrow in the left box shows how the error varies as more malicious \nodes $ M $ are added to the network.
	}
	\label{fig:err_FJ_reg(3)_d10_mal1-10_reg100}
\end{figure}

Analogous behavior is observed when varying the number of malicious \nodes.
We next study what happens by increasing $ M $ while fixing either $ N $ or $ R $.
Figure~\ref{fig:err_FJ_reg(3)_d10_mal1-10} shows
the error curve in a network of $ N = 100 $ \nodes when ten \nodes turn progressively malicious,
while in~\autoref{fig:err_FJ_reg(3)_d10_mal1-10_reg100}
we fix the number of regular \nodes at $ R = 100 $
and add malicious \nodes,
scattering them across the network.
The error curves increase monotonically in~\autoref{fig:err_FJ_reg(3)_d10_mal1-10_reg100_curve},
which is intuitive because regular \nodes face progressively more attacks,
while behave differently in~\autoref{fig:err_FJ_reg(3)_d10_mal1-10_curve},
where the error increases for small values of $ \lam $
but decreases when $ \lam $ grows.
This is due to the lack of a \textquotedblleft normalization"
of the network error,
which decreases with the number of regular \nodes (cf.~\eqref{eq:cons-error-regular-all-priors}).
This emerges for large values of $ \lam $,
when the error is mainly caused by the actual values of \node's priors,
while the error curve increases when $ \lam $ is small.
In both cases, we can see that $ \lam^* $ increases with $ M $ (Figs.~\ref{fig:err_FJ_reg(3)_d10_mal1-10_opt_lam}--\ref{fig:err_FJ_reg(3)_d10_mal1-10_reg100_opt_lam}),
which is because regular \nodes
need to contrast larger and larger amounts of attacks.

\begin{rem}[Value of optimal $ \lam $]
	A remarkable feature of the FJ dynamics,
	which emerges from the above numerical tests,
	is that $ \lam^* $ is relatively small,
	about $ 0.1-0.2 $ for many relevant scenarios.
	In fact, $ \lam^* $ reaches $ 0.3 $
	when, \eg the malicious agent has noise variance $ d = 100 $,
	namely two order of magnitude larger than the variance of priors.
	This translates into the practical advantage that
	adding \textquotedblleft a little" selfishness is sufficient to 
	achieve substantial performance improvement
	compared to standard consensus,
	which may be attractive to get a good level of resilience
	while still letting \node states mix 
	without forcing too conservative updates.
\end{rem}


\section{Comparison with Existing Literature}\label{sec:literature-comparison}

In this section, we compare our algorithm 
with a state-of-the-art resilient optimization strategy,
Weighted Mean Subsequence Reduced (W-MSR), originally proposed for in~\cite{6481629}.
In words, W-MSR ensures that regular agents achieve asymptotic consensus
while maintaining their states within the convex hull of their initial conditions (\textit{resilient consensus}).
Two main features may limit the effectiveness of W-MSR.
The first is that all results rely on the notion of $ r $\textit{-robustness},
which expresses how effective the network is in spreading information.
In particular, most results give sufficient conditions for consensus.
Two practical challenges arise:
on the one hand, the network may be fixed and not enough $ r $-robust,
possibly ruling out MSR-like strategies.
On the other hand, assessing robustness 
in large-scale networks is computationally prohibitive~\cite{7447011}.
Secondly, W-MSR needs to estimate the number of malicious \nodes
to compute the minimal robustness required to succeed. 
This may be problematic if a reliable estimate cannot be provided,
as the updates may be too conservative or misled by adversaries.

Conversely,
a remarkable feature of FJ dynamics is that,
even though it cannot guarantee \emph{average consensus}
(which, in fact, cannot be guaranteed by any resilient algorithm~\cite{7447011}),
it always provides error bounds,
which can be improved by properly tuning $ \lam $.
Further,
while choosing the optimal parameter requires exact knowledge of the adversary,
which is not reasonable,
yet our proposed approach shows good level of robustness to the choice of a specific $ \lam $,
as the plots in~\autoref{sec:numerical-tests-FJ-err} show.
Conversely,
most results in literature do not characterize steady-state behavior of the system
when hypotheses for resilient consensus are not satisfied.
In fact,
they usually either guarantee that \node's states remain in the safety region
(which in practice may be not better than using our approach with $ \lam = 1 $),
or let the \nodes reach consensus but potentially get far away from initial conditions~\cite{8798516}.
Moreover, no extra computation or memory requirements are needed,
as opposed to other algorithms proposed in literature~\cite{8814959}.
This may be relevant to resource-constrained \nodes, 
possibly with real-time requirements.

\begin{figure}
	\centering
	\begin{subfigure}{0.5\linewidth}
		\centering
		\includegraphics[height=.74\linewidth]{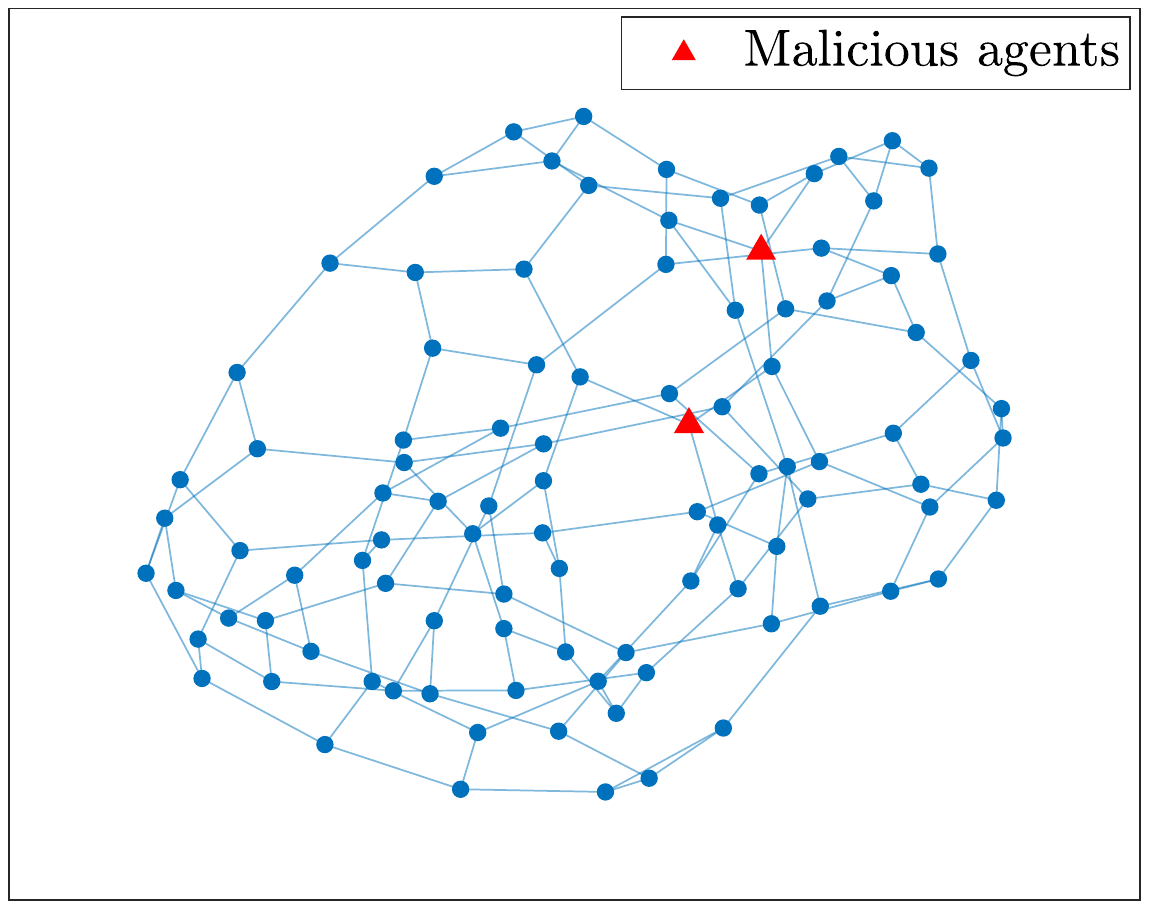}
		\caption{Communication network with malicious \nodes used in simulation.}
		\label{fig:reg(3)_graph}
	\end{subfigure}%
	\begin{subfigure}{0.5\linewidth}
		\centering
		\includegraphics[height=.74\linewidth]{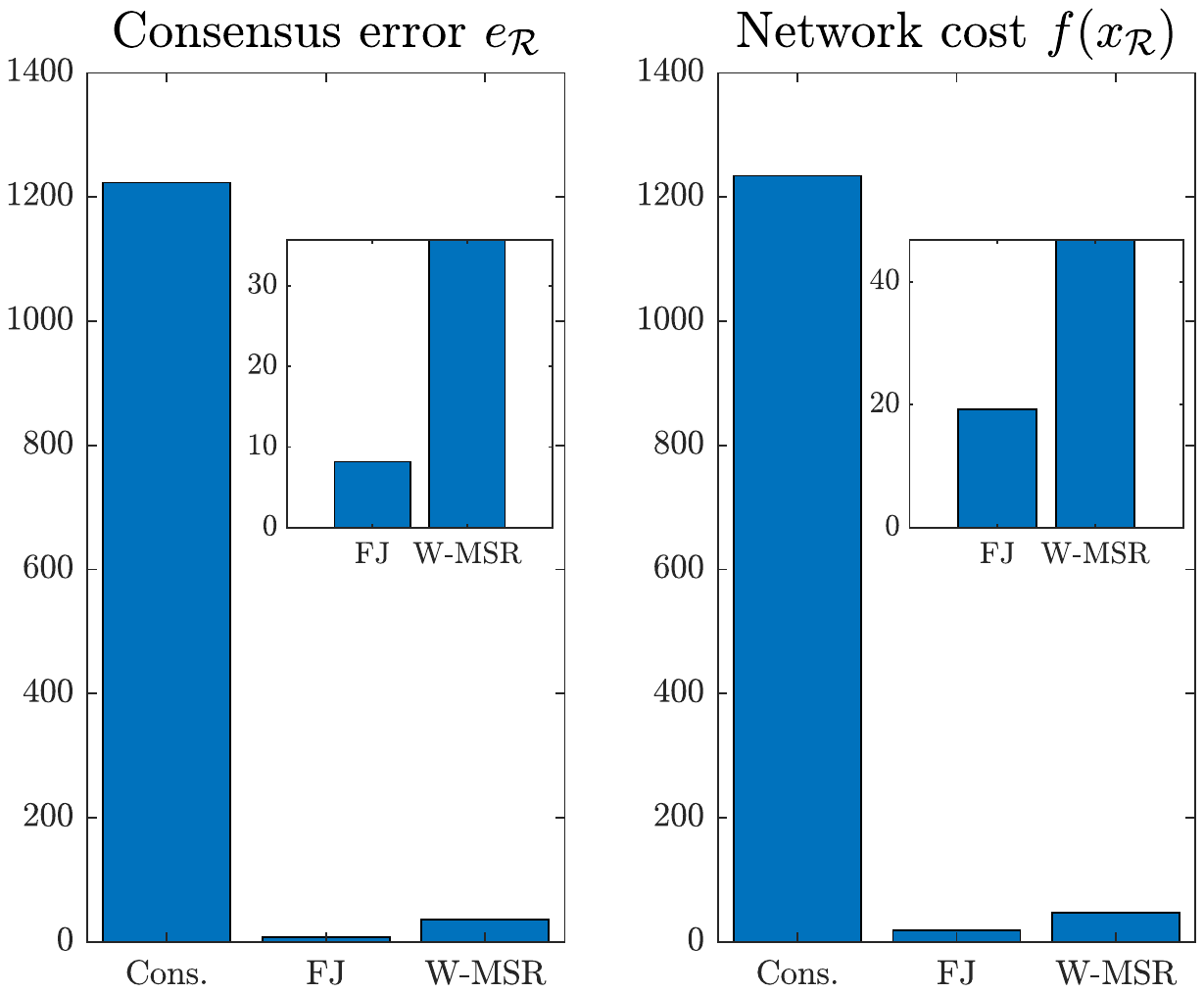}
		\caption{Error~\eqref{eq:cons-error-regular} and cost~\eqref{eq:network-cost-regular} with consensus, FJ dynamics, and W-MSR.}
		\label{fig:reg(3)_net_cost}
	\end{subfigure}
	\caption{Comparison between standard consensus, 
		FJ dynamics (best $ \lam $), 
		and W-MSR with $ 3 $-regular communication graph attacked by $ 2 $ malicious \nodes.
	}
	\label{fig:reg(3)}
\end{figure}

Figure~\ref{fig:reg(3)} shows consensus error and network cost (\autoref{fig:reg(3)_net_cost})
of the two approaches
in a network with $ 100 $ \nodes communicating over a $ 3 $-regular graph (\autoref{fig:reg(3)_graph}),
where the $ 2 $ malicious \nodes are marked as red triangles. 
Note that $ 3 $-regular graphs do not comply with any non-trivial level of $ r $-robustness,
and nothing can be said about W-MSR yielding consensus. 
On the other hand, we compute the theoretical error $ \ereg $ of FJ dynamics
by sampling $ \lam $ in $ (0,1) $,
and use $ \lam = \lam^* $ computed numerically in simulation. 
From~\autoref{fig:reg(3)_net_cost}, we can see that FJ dynamics outperforms W-MSR,
proving to be a competitive approach,
in particular with sparse communication networks.

\section{\titlecap{Conclusion and future research}}\label{sec:future-research}

In the previous sections we have analyzed the theoretical performance
of a model based on FJ dynamics in the presence of malicious \nodes,
bringing insights to algorithmic design.
The proposed approach intends to motivate the general intuition
that competitive approaches may make collaborative tasks
more resilient to external attacks or disturbances,
ideally paving the way to a new viewpoint on strategies for robust and resilient control of networked systems.

This opens several avenues of future research.
First, it is desirable to address an optimal choice of the parameter $ \lam $,
possibly differentiated across \nodes,
in the realistic case where malicious \nodes are unknown.
Further, the algorithm may be improved by letting,
\eg regular \nodes implement smart strategies to detect and isolate adversaries.
Finally, it would be interesting to shift attention to the communication network,
targeting robust design of its topology.
	
	\appendix

\section{\titlecap{decomposition of matrix $ \Gamma $}}\label{app:L-derivative-limit}

We now show how to compute $ \Gamma $~\eqref{eq:Wbar}. 
First, we show how to obtain eigenvalues and eigenvectors of $ \Gamma $ from $ W $.
Second, we show that $ W $ is always diagonalizable,
which implies $ \Gamma $ is also diagonalizable with the same change of basis.

The derivative of $ L $ is 
\begin{equation}\label{eq:L-inv-derivative}
	\dfrac{dL}{d\lam} = \tilde{L} -\lam \tilde{L}\dfrac{d\tilde{L}\inv}{d\lam}\tilde{L} = \tilde{L}-\lam \tilde{L}W\tilde{L},
\end{equation}
where $ \tilde{L}\doteq\left(I-(1-\lam)W\right)\inv $.
Consider the $ i $th eigenvalue of $ W $, denoted as $ \lambda_i $, and its associated eigenvector $ v_i $,
it follows that
$ \tilde{L} $ has $ i $th eigenvalue $ \left(1-(1-\lam)\lambda_i\right)\inv $ with eigenvector $ v_i $.
Hence, straightforward computations yield
\begin{equation}\label{eq:eigendecomposition-derivative-L}
	\dfrac{dL}{d\lam} v_i 
						= \dfrac{1 - \left(1-(1-\lam)\lambda_i\right)\inv\lam\lambda_i}{\left(1-(1-\lam)\lambda_i\right)}v_i\doteq\sigma_i(\lam) v_i.
\end{equation}
In particular, the dominant eigenvector $ v_1 = \one $ (associated with $ \lambda_1 = 1 $)
corresponds to eigenvalue $ \bar{\sigma}_1 = 0 $ for any $ \lam > 0 $.
For $ i > 1 $, by letting $ \lam $ go to zero in~\eqref{eq:eigendecomposition-derivative-L},
one gets
\begin{equation}\label{eq:eigendecomposition-derivative-L-lam-0}
	\bar{\sigma}_i\doteq\lim_{\lam\rightarrow0^+}\sigma_i(\lam) = \left(1-\lambda_i\right)\inv.
\end{equation}
Finally, the eigendecomposition of $ \Gamma $
is obtained by stacking eigenvectors $ \{v_i\}_{i\in\sensSet} $ in $ V $
and eigenvalues $ \{\bar{\sigma}_i\}_{i\in\sensSet} $ in $ D $. 

We now show that $ W $ is always diagonalizable.
Considering the block decomposition~\eqref{eq:block-matrix-W},
and recalling that $ W_{11} $ is symmetric,
it holds
\begin{equation}\label{eq:W-eigendecomposition}
	W = V\left[\begin{array}{ c | c }
		\Lambda_{11} & 0 \\ 
		\hline
		0 & I_M
	\end{array}\right]V\inv,
\end{equation}
where $ W_{11} = V_{11}\Lambda_{11}V_{11} $ and, for any invertible matrix $ V_\malSet $,
\begin{equation}\label{key}
	V = \left[\begin{array}{ c | c }
		V_{11} & (I_R-W_{11})\inv W_{12}V_\malSet\\ 
		\hline
		0 & V_\malSet
	\end{array}\right].
\end{equation}
In particular, $ I_R-W_{11} $ is invertible because the graph is connected~\cite{7577815}.
Hence, eigendecomposition~\eqref{eq:W-eigendecomposition} implies that 
$ \Gamma $ is also diagonalizable through the change of basis $ V $.
	

\end{document}